\documentclass[11pt]{article} 
\usepackage{fullpage}        
\usepackage{setspace}        
\usepackage{float}           
\usepackage{placeins}         

\usepackage{kotex}            
\usepackage[utf8]{inputenc}  

\usepackage{amsmath}          
\usepackage{amsfonts}         
\usepackage{amssymb}          
\usepackage{amsthm}           
\usepackage{mathtools}        

\usepackage{booktabs}         
\usepackage{caption}
\usepackage[round]{natbib}    

\usepackage{algorithm}        
\usepackage{algpseudocode}    

\usepackage[dvipsnames]{xcolor}
\usepackage[colorlinks=true,  
            linkcolor=blue,
            urlcolor=blue,
            citecolor=blue]{hyperref}

\usepackage{authblk}
 
\newtheorem{theorem}{Theorem} 

\newtheorem{lemma}{Lemma}

\newtheorem{remark}{Remark}

\newcommand{\Expect}{\mathbb{E}}

\newcommand{\bI}{\boldsymbol{I}}
\newcommand{\bS}{\boldsymbol{S}}
\newcommand{\bX}{\boldsymbol{X}}
\newcommand{\bY}{\boldsymbol{Y}}
\newcommand{\bzero}{\boldsymbol{0}}

\newcommand{\bmu}{\boldsymbol{\mu}}

\newcommand{\bSigma}{\boldsymbol{\Sigma}}
\newcommand{\bdelta}{\boldsymbol{\delta}}
\newcommand{\bDelta}{\boldsymbol{\Delta}}

\title{Block-Independent Likelihood Ratio Testing for High-Dimensional Mean Vectors with Applications to Matrix-Variate Data}
\date{}

\author[1,$*$]{Minsub Shin}
\author[2,$*$]{Kwangok Seo}
\author[3,4]{Sang Han Lee}
\author[1]{Johan Lim}

\affil[1]{Department of Statistics, Seoul National University, Seoul, Korea}
\affil[2]{Department of Statistics, Inha University, Incheon, Korea}
\affil[3]{Center for Dementia Research, Nathan S. Kline Institute for Psychiatric Research, New York, USA}
\affil[4]{Department of Psychiatry, NYU Grossman School of Medicine, New York, USA}

\begin{document}

\maketitle 

\begingroup
  \renewcommand{\thefootnote}{}
  \footnotetext{$^*$These authors contributed equally to this work.}
  \footnotetext{Corresponding author: Johan Lim, \texttt{johanlim@snu.ac.kr}}
\endgroup

\setcounter{footnote}{0}

\begin{abstract}
\noindent Testing the equality of two high-dimensional mean vectors is a fundamental problem in multivariate analysis. While the classical Hotelling's $T^2$ test is optimal in low-dimensional settings, it fails when the dimension $p$ is comparable to or exceeds the sample size $n$. Several extensions, including the Diagonal Likelihood Ratio Test (DLRT), have been proposed under the working independence assumption among variables. However, such an assumption can lead to a substantial loss of power when correlations are present. In this paper, we propose a new test, the Block Independent Likelihood Ratio Test (BILT), which generalizes DLRT by relaxing the working independence assumption to a block independence assumption. We establish its asymptotic normality of the null distirbution of the BILT statistic for `increasing $p$ with small $n$' under mild regularity conditions. We further analyze the asymptotic power of BILT under a local alternatives. Extensive simulation studies show that BILT maintains Type~I error control and achieves substantially higher power than DLRT across a wide range of covariance structures. An application to the Alzheimer’s Disease Neuroimaging Initiative (ADNI) dataset further demonstrates the application of  BILT to testing mean differences between two matrix-variate populations. 

\medskip
\noindent {\bf Keywords:} Block independence;  Large $p$ and small $n$ data; Likelihood ratio test; Matrix-variate data; Two-sample mean test. 
\end{abstract}

\section{Introduction} 
In this paper, we consider the multivariate two-sample mean testing problem, where the two populations share a common covariance matrix. Specifically, let
\begin{equation}\label{eqn:random_sample}
\begin{aligned}
    \bX_i &= (X_{i1}, X_{i2}, \ldots, X_{ip})^\top \overset{\mathrm{i.i.d.}}{\sim}
    N_p(\bmu_1, \bSigma), \quad i = 1, \ldots, n_1, \\
    \bY_i &= (Y_{i1}, Y_{i2}, \ldots, Y_{ip})^\top \overset{\mathrm{i.i.d.}}{\sim}
    N_p(\bmu_2, \bSigma), \quad i = 1, \ldots, n_2,
\end{aligned}
\end{equation}
where the two samples are independent, and $N_p(\bmu, \bSigma)$ denotes the $p$-dimensional normal distribution with mean vector $\bmu$ and covariance matrix $\bSigma$. Under this setting, our interest lies in testing the hypothesis
\begin{equation}\label{eqn:hypothesis}
    H_0 : \bmu_1 = \bmu_2
    \quad \text{vs.} \quad
    H_1 : \bmu_1 \neq \bmu_2 .
\end{equation}

A classical and widely used approach for testing \eqref{eqn:hypothesis} is based on Hotelling’s $T^2$ statistic \citep{hotelling1931}, defined as
\begin{equation}\label{eqn:T2}
    T^2 = \frac{n_1 n_2}{N}
    (\bar{\bX} - \bar{\bY})^\top
    \bS^{-1}
    (\bar{\bX} - \bar{\bY}),
\end{equation}
where $N = n_1 + n_2$, $\bar{\bX} = n_1^{-1}\sum_{i=1}^{n_1} \bX_i$ and $\bar{\bY} = n_2^{-1}\sum_{i=1}^{n_2} \bY_i$ are the sample mean vectors of the two groups, and $\bS$ denotes the pooled sample covariance matrix, given by
\begin{equation*}
    \bS = \frac{1}{N-2} \Bigg(
    \sum_{i=1}^{n_1} (\bX_i - \bar{\bX})(\bX_i - \bar{\bX})^\top
    + \sum_{i=1}^{n_2} (\bY_i - \bar{\bY})(\bY_i - \bar{\bY})^\top
    \Bigg).
\end{equation*}
Under the null hypothesis $H_0$, the statistic in \eqref{eqn:T2} has the exact finite-sample distribution
\begin{equation*}
    \frac{N - p - 1}{(N - 2)p} T^2 \sim F_{p,\,N-p-1},
\end{equation*}
where $F_{\mathrm{df}_1, \mathrm{df}_2}$ denotes the $F$ distribution with numerator and denominator degrees of freedom $\mathrm{df}_1$ and $\mathrm{df}_2$, respectively. This distributional result provides a basis for testing the hypothesis in \eqref{eqn:hypothesis}.

However, it is well known that when the dimension $p$ approaches $N$, the power of Hotelling’s $T^2$ test decreases considerably \citep{bai1996effect}. Moreover, when $p > N-2$, the pooled sample covariance matrix $\bS$ becomes singular, and the Hotelling’s $T^2$ statistic in \eqref{eqn:T2} is no longer well defined. To address these challenges arising from high dimensionality, several modifications of the statistic in \eqref{eqn:T2} have been proposed in the literature. One line of work is based on regularization of the sample covariance matrix. For example, \citet{li2020adaptable} introduced regularized versions of Hotelling’s $T^2$ test by replacing the pooled sample covariance matrix $\bS$ in \eqref{eqn:T2} with a linear shrinkage estimator $\bS + \lambda \bI_p$, where $\bI_p$ denotes the $p$-dimensional identity matrix and $\lambda > 0$ is a regularization parameter controlling the amount of shrinkage. As another line of work, \citet{wu2006multivariate} and \citet{srivastava2008test} proposed replacing the pooled sample covariance matrix $\bS$ with its diagonal counterpart.\footnote{Although our focus is on the two-sample problem, related developments for the one-sample mean testing problem have also been studied; see, for example, \citet{chen2011regularized} for regularized method and \citet{srivastava2009test} for diagonal-based approach, among others.}

Beyond regularization and diagonal modifications of Hotelling’s $T^2$ statistic, several other lines of research have been developed. These include methods based on $L_2$-type statistics \citep{bai1996effect, chen2010two} for dense signal settings and maximum-type statistics \citep{tony2014two} for sparse signal settings; however, such methods are typically powerful only under specific alternatives, making it difficult to select a powerful test in practice. To address this limitation, \citet{xu2016adaptive} proposed an adaptive sum-of-powers test. Moreover, random projection-based methods \citep{lopes2011more, srivastava2016raptt}, optimal projection-based methods \citep{huang2015projection, liu2024projection}, and likelihood ratio-based methods \citep{hu2019diagonal} have been proposed. We refer the reader to \citet{huang2022overview} and the references therein for a comprehensive review of high-dimensional mean testing.

Among these approaches, our proposed method is inspired by the likelihood ratio based testing procedure of \citet{hu2019diagonal}. Specifically, \citet{hu2019diagonal} proposed a testing procedure, referred to as the Diagonal Likelihood Ratio Test (DLRT), which  constructs its test statistic under a working independence assumption, meaning that features are treated as mutually independent when forming the statistic, regardless of the true underlying dependence structure. They stablished the asymptotic normality of the null distribution of the DLRT statistic for `increasing $p$ with small $n$' under mild regularity conditions. DLRT takes account for the depedence among variables through the variance of the DLRT statistic, indirectly. It is one of a few procedures that are asymptotically valid in the large-$p$, small-$n$ regime.

In this paper, we  propose the Block Independence Likelihood Ratio Test (BILT), which generalizes the working independence assumption of \citet{hu2019diagonal} to a working block independence assumption. Unlike DLRT, BILT directly incorporates dependence information into the test statistic and thereby improves the power without sacrificing asymptotic validity. When the block size is set to one, BILT reduces to DLRT as a special case. The main contributions of this paper can be summarized as follows:

\begin{itemize}
    \item The BILT statistic is constructed as a likelihood ratio under the working block independence assumption, which, in this view, can be referred as a composite likelihood ratio test \citep{varin2005note, huang2020composite}. 

    \item We establish the asymptotic normality of the null distribution as $p$ increases, under mild regularity conditions. Here, the block-wise elementary statistics of BILT are quadratic forms following a scaled F-distribution, which differs from DLRT where the corresponding statistics follow a t-distribution; this distinction makes the extension to BILT non-trivial. We further derive the asymptotic power of BILT under local alternatives.
     
    \item Extensive numerical studies demonstrate that BILT (with block size $b > 1$) yields higher power than DLRT, except when the true covariance structure is diagonal, where  the working independence assumption of DLRT holds. However, even for this case, BILT (with block size $b > 1$) performs comparably to DLRT. Comparisons with several existing methods further show that BILT with a small block size (e.g. $b = 2$) consistently achieves superior performance.

    \item Finally, BILT naturally extends to testing mean differences in matrix-variate data. Given a sample of $\ell \times m$ matrices with $\ell < m$, each matrix can be vectorized into a $p = \ell \cdot m$-dimensional vector, and BILT can then be applied with a block size chosen as a multiple of $\ell$. A key advantage of this approach is that the choice of block size allows one to explicitly incorporate the dependency structure inherent in the matrix-valued observations: a block size of $\ell$ captures the row-wise dependency, while a larger block size additionally accounts for column-wise dependency. This flexibility enables BILT to achieve higher power than methods that ignore the dependency structure, such as DLRT. We illustrate this through an analysis of the Alzheimer's Disease Neuroimaging Initiative (ADNI) dataset.

\end{itemize}

\noindent \textbf{Outline.} The remainder of the paper is organized as follows. In Section~\ref{sec_2}, we briefly review the main ideas and theoretical results of the DLRT proposed by \citet{hu2019diagonal}. In Section~\ref{sec_3}, we introduce our proposed method, referred to as BILT, which generalizes the DLRT by incorporating block-wise dependence among variables into the construction of the test statistic. We derive the corresponding test statistic, establish its asymptotic normality, and provide the asymptotic power of the test for increasing $p$. Section~\ref{sec_4} presents extensive simulation studies to evaluate the finite-sample performance of BILT under various covariance structures and to compare its size and power to those of DLRT and other existing methods. In Section~\ref{sec_5}, we apply the proposed method to the ADNI dataset to show its applicability to testing the mean differences in matrix variate imaging data. Finally, Section~\ref{sec_6} concludes the paper with a summary and a discussion of potential directions for future research.   

\section{Preliminary} \label{sec_2}
In this section, we briefly review the DLRT proposed by \citet{hu2019diagonal} for the two-sample mean testing problem. We revisit the framework introduced in \eqref{eqn:random_sample}, where independent random samples are drawn from two independent normal populations, and consider testing the hypothesis in \eqref{eqn:hypothesis}. 

To facilitate the construction of the test statistic, \citet{hu2019diagonal} introduced a \emph{working covariance matrix}
\begin{equation}\label{eqn:w_cov_DLRT}
    \bSigma^{\mathrm{w}}_{\mathrm{DLRT}} = \mathrm{diag}(\sigma_{11}, \ldots, \sigma_{pp}) \in \mathbb{R}^{p \times p},    
\end{equation}
where $\sigma_{jj} = (\bSigma)_{jj}$ denotes the $j$th diagonal element of the true covariance matrix $\bSigma$. The superscript `$\mathrm{w}$' indicates that $\bSigma^{\mathrm{w}}_{\mathrm{DLRT}}$ is not the true covariance matrix but a working approximation. Treating $\bSigma^{\mathrm{w}}_{\mathrm{DLRT}}$ as if it were the true covariance matrix $\bSigma$ amounts to conducting the analysis under a \emph{working independence assumption}. Under this assumption, the joint likelihood function for the parameters $\bmu_1$, $\bmu_2$, and $\bSigma^{\mathrm{w}}_{\mathrm{DLRT}}$ is given by
\begin{equation*}
    L(\bmu_1, \bmu_2, \bSigma^{\mathrm{w}}_{\mathrm{DLRT}})
    = \prod_{j = 1}^p (2\pi)^{- \frac{N}{2}}
    \sigma_{jj}^{-\frac{N}{2}}
    \exp \left\{
    -\frac{1}{2\sigma_{jj}}
    \left(
    \sum_{i = 1}^{n_1} (X_{ij} - \mu_{1j})^2
    + \sum_{i = 1}^{n_2} (Y_{ij} - \mu_{2j})^2
    \right)
    \right\},
\end{equation*}
where $\mu_{1j}$ and $\mu_{2j}$ denote the $j$th components of the mean vectors $\bmu_1$ and $\bmu_2$, respectively. The DLRT statistic $T_{\mathrm{DLRT}}$ is defined as minus twice the log-likelihood ratio,
\begin{align*}
    T_{\mathrm{DLRT}} \coloneqq -2 \log(\Lambda_N)
    =
    -2 \log
    \left(
    \frac{
    \underset{\bmu_1, \bmu_2, \bSigma^{\mathrm{w}}_{\mathrm{DLRT}}}{\max}
    L(\bmu_1, \bmu_2, \bSigma^{\mathrm{w}}_{\mathrm{DLRT}} \mid H_0)
    }{
    \underset{\bmu_1, \bmu_2, \bSigma^{\mathrm{w}}_{\mathrm{DLRT}}}{\max}
    L(\bmu_1, \bmu_2, \bSigma^{\mathrm{w}}_{\mathrm{DLRT}})
    } 
    \right)
    = N \sum_{j = 1}^p \log\left(1 + \frac{t_{N,j}^2}{N - 2}\right),
\end{align*}
where
\begin{equation*}
    t_{N,j}^2
    = \frac{n_1 n_2}{N}
    \frac{(\bar{X}_j - \bar{Y}_j)^2}{s_{j,\mathrm{pool}}^2},
\end{equation*}
with $\bar{X}_j$ and $\bar{Y}_j$ denoting the sample means of the $j$th variable for the two groups, and $s_{j,\mathrm{pool}}^2$ the corresponding pooled sample variance.

For simplicity, let $V_{N,j} = N \log\!\left(1 + t_{N,j}^2/(N - 2)\right)$ for $j = 1, \ldots, p$. Then, the DLRT statistic can be expressed as $T_{\mathrm{DLRT}} = \sum_{j = 1}^p V_{N,j}$. Since the components $\{V_{N,j}\}_{j=1}^p$ are generally dependent, additional assumptions are required to establish the asymptotic behavior of $T_{\mathrm{DLRT}}$. \citet{hu2019diagonal} imposed the following regularity conditions:
\begin{itemize}
    \item[(C1)] 
    The stationary sequence $\{V_{N,j}\}$ satisfies the strong mixing condition such that
    \begin{equation*}
        \alpha_{\mathrm{DLRT}}(r) \coloneqq \sup_{1 \le k \le p-r}
        \alpha\!\left(\mathcal{F}_1^k, \mathcal{F}_{k+r}^p\right) \to 0 
        \quad \text{as } r \to \infty,    
    \end{equation*}
    where $\mathcal{F}_a^b = \sigma\{V_{N,j} : a \le j \le b\}$ and $\alpha(\mathcal{F}, \mathcal{G}) \coloneqq \sup_{A \in \mathcal{F},\, B \in \mathcal{G}} \lvert \mathbb{P}(A \cap B) - \mathbb{P}(A)\mathbb{P}(B) \rvert$ denotes the strong mixing coefficient between two $\sigma$-fields $\mathcal{F}$ and $\mathcal{G}$.
        
    \item[(C2)] 
    For some $\delta > 0$, $\sum_{r=1}^\infty \alpha_{\mathrm{DLRT}}(r)^{\delta/(2+\delta)} < \infty$, and for any $k \ge 0$, the following limit exists:
    \begin{equation*}
        \gamma_{\mathrm{DLRT}}(k)
        := \lim_{p \to \infty} \frac{1}{p-k}
        \sum_{j=1}^{p-k} \mathrm{Cov}\!\left(V_{N,j}, V_{N,j+k}\right).
    \end{equation*}
\end{itemize}

\noindent To describe the limiting distribution of the DLRT statistic, let $\Gamma(x) = \int_0^\infty t^{x-1} e^{-t}\,dt$ denote the gamma function, $\Psi(x) = \Gamma'(x)/\Gamma(x)$ the digamma function, and define $D(x) = \Psi\!\left((x+1)/2\right) - \Psi(x/2)$. Theorem~\ref{thm1} establishes the asymptotic normality of the DLRT statistic.

\begin{theorem}[Theorem~3 in \citet{hu2019diagonal}]\label{thm1}
Assume that conditions (C1) and (C2) hold and that the sequence $\{V_{N,j}\}$ is stationary. Then, for any fixed $N \ge 4$ and under the null hypothesis,
\begin{equation*}
    \frac{T_{\mathrm{DLRT}} - p G_1}{\tau_{\mathrm{DLRT}} \sqrt{p}}
    \;\overset{d}{\to}\;
    N(0,1)
    \quad \text{as} \quad p \to \infty,
\end{equation*}
where $G_1 \coloneq \Expect[V_{N,j}] = N D(N-2)$, $G_2 \coloneqq \Expect[V_{N,j}^2] = N^2\{D^2(N-2) - 2D'(N-2)\}$ and $\tau_{\mathrm{DLRT}}^2 = G_2 - G_1^2 + 2 \sum_{k = 1}^\infty \gamma_{\mathrm{DLRT}}(k)$.
\end{theorem}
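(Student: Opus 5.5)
The proof has two parts: first compute the marginal moments of a single $V_{N,j}$, then apply a central limit theorem for strongly mixing stationary sequences to $\sum_j V_{N,j}$.

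\textbf{Marginal moments.} Under $H_0$ each $t_{N,j}$ has a Student $t$ distribution with $\nu = N-2$ degrees of freedom, whatever $\bSigma$ is. Hence
\[
U_j \coloneqq \frac{t_{N,j}^2/\nu}{1+t_{N,j}^2/\nu} \sim \mathrm{Beta}(1/2,\nu/2),
\qquad
V_{N,j} = -N\log(1-U_j).
\]
Since $1-U_j \sim \mathrm{Beta}(\nu/2,1/2)$, the log-moments of a beta variable give
\[
\Expect[\log(1-U_j)] = \Psi(\nu/2) - \Psi((\nu+1)/2) = -D(\nu),
\qquad
\mathrm{Var}[\log(1-U_j)] = \Psi'(\nu/2) - \Psi'((\nu+1)/2) = -D'(\nu).
\]
Therefore $G_1 = N D(N-2)$, and
\[
G_2 = \mathrm{Var}(V_{N,j}) + G_1^2 = N^2\{D^2(N-2) - D'(N-2)\}.
\]
This formula does not match the stated $G_2$, which has $2D'$. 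I would reconcile the two by checking the paper's convention for $D'$; the discrepancy only affects the constant's form, not the argument. The moment generating function $\Expect[(1-U_j)^{-sN}]$ is finite for small $s$, so $V_{N,j}$ has moments of every order. In particular $\Expect|V_{N,j}|^{2+\delta} < \infty$, and this uniform moment bound holds for fixed $N \ge 4$, which keeps $\nu \ge 2$ and the Beta parameters nondegenerate.

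\textbf{Mixing CLT.} With the moment bound in hand, I would invoke Ibragimov's central limit theorem for stationary $\alpha$-mixing sequences (as in Doukhan's or Bradley's form). It assumes $\Expect|V|^{2+\delta} < \infty$ and $\sum_r \alpha(r)^{\delta/(2+\delta)} < \infty$, which is exactly (C2). It concludes that $\sum_k |\mathrm{Cov}(V_{N,1}, V_{N,1+k})|$ converges and that
\[
\frac{\sum_{j=1}^p (V_{N,j} - G_1)}{\sqrt{p}} \overset{d}{\to} N(0,\tau^2),
\qquad
\tau^2 = \mathrm{Var}(V_{N,1}) + 2\sum_{k\ge1}\mathrm{Cov}(V_{N,1}, V_{N,1+k}).
\]
Covariance summability follows from Davydov's inequality,
\[
|\mathrm{Cov}(V_{N,j}, V_{N,j+k})| \le C\,\alpha(k)^{\delta/(2+\delta)}\,\|V\|_{2+\delta}^2 .
\]

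\textbf{Main obstacle: the triangular-array setting.} The sequence $\{V_{N,j}\}$ depends on $p$ through $\bSigma$, so I would run Bernstein's big-block/small-block argument directly. Choose big blocks of length $a_p \to\infty$ and small blocks of length $b_p = o(a_p)$, with $b_p\to\infty$ and $p\,\alpha(b_p)/a_p \to 0$. The small blocks are negligible in $L^2$ by covariance summability. The big blocks are approximately independent, because the characteristic-function error is bounded by $(p/a_p)\,\alpha(b_p)$. The Lyapunov condition for the big-block sums follows from the $(2+\delta)$-moment bound together with a Rosenthal-type inequality under mixing. For the variance, (C2) identifies the limit: the variance of a big-block sum divided by $a_p$ converges to $\gamma(0) + 2\sum_{k\ge1}\gamma(k) = \tau_{\mathrm{DLRT}}^2$. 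Here $\gamma(0) = G_2 - G_1^2$, and interchanging the limit with the sum is justified by dominated convergence using the uniform Davydov bound.

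Dividing by $\tau_{\mathrm{DLRT}}$, assuming it is positive, gives the stated limit.
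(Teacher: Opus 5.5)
\textbf{The error in $G_2$.} The discrepancy you report in $G_2$ is a slip in your own computation, not a question of convention. As written, your proposal would derive an incorrect value for a quantity the statement gets right. Your Beta log-variance $\mathrm{Var}[\log(1-U_j)] = \Psi'(\nu/2) - \Psi'((\nu+1)/2)$ is correct. But since $D(x) = \Psi((x+1)/2) - \Psi(x/2)$, the chain rule gives $D'(x) = \tfrac{1}{2}\{\Psi'((x+1)/2) - \Psi'(x/2)\}$, hence
\[
\mathrm{Var}[\log(1-U_j)] = -2D'(\nu), \qquad \mathrm{Var}(V_{N,j}) = -2N^2 D'(N-2).
\]
Therefore $G_2 = \mathrm{Var}(V_{N,j}) + G_1^2 = N^2\{D^2(N-2) - 2D'(N-2)\}$, exactly as stated. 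This also agrees with Lemma~\ref{lemma1} at $p_k = 1$, where $D_1 = D$. Correct this line and delete the remark about reconciling conventions.

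\textbf{Comparison with the paper.} With that fixed, your argument is the paper's. The paper quotes this theorem from \citet{hu2019diagonal} without proof. Its treatment of the block analogue (Lemma~\ref{lemma1} and Theorem~\ref{thm3}, which reduce to this statement when $b=1$) uses the same two steps. First, it obtains Beta log-moments by differentiating the Beta function. Second, it applies Corollary~5.1 of \citet{hall2014martingale}, i.e.\ Ibragimov's CLT for stationary strongly mixing sequences, after verifying a $(2+\delta)$-moment bound by splitting a Beta integral at $1/2$ (Appendix~\ref{Appendix_B}). Your observation that $-\log$ of a $\mathrm{Beta}(\nu/2,1/2)$ variable has an exponential tail gives that moment bound more quickly.

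\textbf{The blocking argument.} Your Bernstein big-block/small-block section is not needed under the stated hypothesis that $\{V_{N,j}\}$ is a single stationary sequence. In that case $\gamma_{\mathrm{DLRT}}(k) = \mathrm{Cov}(V_{N,1}, V_{N,1+k})$, and $\tau_{\mathrm{DLRT}}^2$ is exactly the long-run variance in Ibragimov's theorem. What the blocking buys is a more general, row-wise stationary triangular-array version with $p$-dependent $\bSigma$, where the limit in (C2) really is what identifies the variance. That version needs two further things:
\begin{itemize}
\item You must assume the row-wise mixing coefficients are dominated by one summable sequence uniformly in $p$. Condition (C1), defined separately for each $p$, does not literally give this, yet your ``uniform Davydov bound'' and the requirement $p\,\alpha(b_p)/a_p \to 0$ silently use it.
\item The Lyapunov step should rest on the fact that $V_{N,j}$ has moments of all orders. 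That is what makes a Rosenthal/Yokoyama-type bound $\Expect|S_a|^{2+\delta'} \le C a^{1+\delta'/2}$, for centered block sums of length $a$, available at the mixing rate assumed in (C2).
\end{itemize}
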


In addition to the null distribution, \citet{hu2019diagonal} also investigate the asymptotic power of the DLRT under a local alternatives specified by
\begin{equation}\label{eqn:local_alter_DLRT}
    \bmu_1 - \bmu_2
    = \sqrt{\frac{N}{n_1 n_2}}\, \bdelta,
\end{equation}
where $\bdelta = (\delta_1, \ldots, \delta_p)^\top$. Define $\bDelta = (\delta_1/\sqrt{\sigma_{11}}, \ldots, \delta_p/\sqrt{\sigma_{pp}})^\top$ and assume that its components are uniformly bounded such that
\begin{equation}\label{eqn:unif_bound_DLRT}
    \max_{1 \le j \le p} |\Delta_j| \le M_{\mathrm{DLRT}},
\end{equation}
where $M_{\mathrm{DLRT}}$ is a constant independent of $N$ and $p$. Under these conditions, Theorem~\ref{thm2} characterizes the asymptotic power of the DLRT.

\begin{theorem}[Theorem~4 in \cite{hu2019diagonal}]\label{thm2}
    Suppose that $p$ diverges with $N$ at a rate satisfying $p = o(N^2)$. If the sequence $\{V_{N,j}\}$ is stationary and satisfies conditions (C1) and (C2), then under the local alternative in \eqref{eqn:local_alter_DLRT} and the uniform boundedness condition in \eqref{eqn:unif_bound_DLRT}, the asymptotic power of the level-$q$ test is given by
    \begin{equation*}
        \beta(T_{\mathrm{DLRT}}) = 1 - \Phi\left(z_q - \frac{\bDelta^\top \bDelta/\sqrt{p}}{\tau_{\mathrm{DLRT}}}\right)
        \quad \text{as} \quad (N,p) \to \infty,
    \end{equation*}
    where $z_q$ denotes the upper $q$-quantile of the standard normal distribution. 
\end{theorem}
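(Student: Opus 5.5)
The plan is to reduce the power calculation to the null result of Theorem~\ref{thm1}. I would write each summand under the alternative as $V_{N,j} = V_{N,j}^{0} + (V_{N,j} - V_{N,j}^{0})$, where $V_{N,j}^{0}$ is the statistic computed from the mean-centred data, so that it has the null law. Under \eqref{eqn:local_alter_DLRT}, $t_{N,j} = (Z_j + \Delta_j)/(s_{j,\mathrm{pool}}/\sqrt{\sigma_{jj}})$. Here $Z_j = \sqrt{n_1n_2/N}\,\{(\bar X_j-\bar Y_j)-(\mu_{1j}-\mu_{2j})\}/\sqrt{\sigma_{jj}} \sim N(0,1)$, and $(N-2)s_{j,\mathrm{pool}}^2/\sigma_{jj}\sim\chi^2_{N-2}$ independently of $Z_j$. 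Hence $t_{N,j}$ is a noncentral $t$ with $N-2$ degrees of freedom and noncentrality $\Delta_j$. The dependence structure across $j$ is unchanged, since a mean shift does not alter the centred variables. So (C1)--(C2) continue to hold for $\{V_{N,j}-\Expect V_{N,j}\}$, and the mixing CLT used for Theorem~\ref{thm1} gives
\[
\frac{T_{\mathrm{DLRT}}-\sum_{j}\Expect V_{N,j}}{\tau_{\mathrm{DLRT}}\sqrt p}\overset{d}{\to}N(0,1).
\]
This requires checking that the variance of the sum under the alternative is still $p\,\tau^2_{\mathrm{DLRT}}(1+o(1))$.

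The core computation is a Taylor expansion of the mean and variance of $N\log(1+t^2/(N-2))$ in the noncentrality, using $N\to\infty$. Write $t_{N,j}^2/(N-2) = (Z_j+\Delta_j)^2/W_j$, where $W_j=\chi^2_{N-2}$. Then $N\log(1+x)=Nx-Nx^2/2+\cdots$. With $x=O_p(1/N)$ this gives
\[
V_{N,j}=(Z_j+\Delta_j)^2\,\frac{N}{W_j}+O_p(N^{-1}).
\]
Taking expectations, $\Expect V_{N,j}-G_1 = \Delta_j^2\,\Expect[N/W_j]+O(\Delta_j^2/N + \Delta_j^4/N) = \Delta_j^2(1+O(1/N))$. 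The remainder is uniform in $j$ by \eqref{eqn:unif_bound_DLRT}, because $|\Delta_j|\le M_{\mathrm{DLRT}}$ makes all the higher-order terms bounded by constants times $N^{-1}$. Summing over $j$ gives $\sum_j \Expect V_{N,j} - pG_1 = \bDelta^\top\bDelta\,(1+O(N^{-1}))$. The error term is $O(p/N)$ at most, and divided by $\sqrt p$ it is $O(\sqrt p/N)=o(1)$ exactly when $p=o(N^2)$. This is where the rate condition enters.

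For the variance, I would show that $\mathrm{Var}(V_{N,j})$ and the covariances $\mathrm{Cov}(V_{N,j},V_{N,j+k})$ differ from their null counterparts by terms that, after averaging, keep $\tau^2_{\mathrm{DLRT}}$ as the limit. A cleaner route is to note that $\tau_{\mathrm{DLRT}}$ is the limit scale under the local alternative in the sense of the statement, i.e.\ the shift enters only at first order in the mean.

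With these two pieces, the rejection event is $T_{\mathrm{DLRT}}>pG_1+z_q\tau_{\mathrm{DLRT}}\sqrt p$. Standardising by $\sum_j\Expect V_{N,j}$ converts it to $N(0,1)>z_q-\bDelta^\top\bDelta/(\sqrt p\,\tau_{\mathrm{DLRT}})+o(1)$, which gives the stated power. I expect the main obstacle to be the variance step. Under the alternative the cross terms $2\Delta_j Z_j N/W_j$ add $4\Delta_j^2$ to the variance of each summand, together with cross-covariances of order $\Delta_j\Delta_{j+k}\rho_{j,j+k}$. These are not negligible individually. I would first try to control them by showing that their contribution is $O(\sum_j\Delta_j^2)/p$ relative to $\tau^2_{\mathrm{DLRT}}$, which is $o(1)$ whenever the power is nontrivial, i.e.\ when $\bDelta^\top\bDelta=O(\sqrt p)$. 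In the regime $\bDelta^\top\bDelta/\sqrt p\to\infty$ both sides of the formula tend to one anyway, so I would treat that case separately via Chebyshev.
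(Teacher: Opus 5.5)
Your plan follows the same route as the paper. The paper does not prove this statement itself (it is quoted from Hu et al.), but its proof of Theorem~\ref{thm4} in Appendix~\ref{Appendix_C} reduces to it at $b=1$. That proof splits $T_{\mathrm{DLRT}}-pG_1$ into a centred part plus the mean shift, and uses the noncentral law of $t_{N,j}^2$ to get $\Expect V_{N,j}=1+\Delta_j^2+O(1/N)$ uniformly under \eqref{eqn:unif_bound_DLRT}. The rate $p=o(N^2)$ then absorbs the $O(\sqrt p/N)$ remainder.

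For the mean step, the paper uses the deterministic sandwich $x-x^2/2\le\log(1+x)\le x$ (for $x\ge 0$) together with the first two moments of the noncentral $F$. You should do the same, since an $O_p(N^{-1})$ remainder cannot simply be integrated.

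Where you go beyond the paper is the variance. The paper just asserts, ``under (C1$^\prime$) and (C2$^\prime$)'', that the centred statistic is asymptotically normal with the \emph{null} $\tau$, and then applies Slutsky with $\hat\tau$. It never notes the extra $4\Delta_j^2$ per summand or the cross terms $4\Delta_j\Delta_{j'}\rho_{jj'}$. Your case split makes that step honest:
\begin{itemize}
\item for $\bDelta^\top\bDelta=O(\sqrt p)$, show the variance perturbation is $o(p)$;
\item for $\bDelta^\top\bDelta/\sqrt p\to\infty$, use Chebyshev with $\mathrm{Var}(T_{\mathrm{DLRT}})=O(p)$;
\item handle intermediate cases by passing to subsequences.
\end{itemize}

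Two points need fixing.

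\textbf{Transfer of the mixing conditions.} Your claim that (C1)--(C2) `continue to hold' because the centred variables are unchanged is not valid. Under the alternative, $V_{N,j}$ is a function of $(Z_j+\Delta_j)^2/W_j$, which sees the sign of $Z_j$. The null-law $V_{N,j}$ sees only $Z_j^2/W_j$. So the generated $\sigma$-fields are larger and can mix more slowly: for Gaussian coordinates, even functionals decorrelate like $\rho^2$, general ones like $|\rho|$. In addition, with heterogeneous $\Delta_j$ the sequence is not stationary, and since $N\to\infty$ you need a triangular-array CLT. Corollary~5.1 of Hall--Heyde therefore cannot be quoted as is. You must read the mixing and stationarity hypotheses as imposed on the alternative-law sequence, uniformly in $N$, which is what the paper tacitly does.

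\textbf{Bound on the cross terms.} Bounding the cross terms by $O(\bDelta^\top\bDelta)$ needs $\sup_j\sum_{j'}|\rho_{jj'}|<\infty$, which is not assumed. You only need $o(p)$, which you can get as follows. For lags $k\le m$, the covariance perturbation is second order in $(\Delta_j,\Delta_{j+k})$ by the joint symmetry $Z\mapsto -Z$, giving $O(m\,\bDelta^\top\bDelta)$ in total. Lags $k>m$ contribute $o(p)$ under both laws, by Davydov's inequality and the summability in (C2). Then take $m\to\infty$ with $m=o(\sqrt p)$.
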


DLRT is one of the few procedures that are asymptotically valid in the large-$p$, small-$n$ regime. However, DLRT cannot incorporate dependency information into the construction of the test statistic due to its working independence assumption, which leads to power loss when the true covariance matrix $\boldsymbol{\Sigma}$ is not diagonal. To address this limitation, we propose a method called BILT that incorporates the dependency structure into the construction of the test statistic, thereby improving power without sacrificing asymptotic validity.

\section{Method} \label{sec_3}
\subsection{BILT statistic} \label{sub_sec_3.1}
In this section, we present the Block Independent Likelihood ratio Test (BILT) for two-sample mean testing by extending the idea of DLRT to the more generous setting. 

Suppose that the $p$ components of $\bX_i \in \mathbb{R}^p$ are partitioned into $K$ groups of size $p_k$, so that $\bX_i = (\bX_{i1}^\top, \ldots, \bX_{iK}^\top)^\top$, where $\bX_{ik} \in \mathbb{R}^{p_k}$ with $\sum_{k=1}^K p_k = p$. The vector $\bY_i$ is partitioned analogously as $\bY_i = (\bY_{i1}^\top, \ldots, \bY_{iK}^\top)^\top$. Let $\bmu_{1k} = \Expect[\bX_{ik}]$, $\bmu_{2k} = \Expect[\bY_{ik}]$, and let $\bSigma_{kk}$ denote the common covariance matrix of $\bX_{ik}$ and $\bY_{ik}$. We relax the working independence assumption of \citet{hu2019diagonal} to a \emph{working block independence assumption}. In other words, we generalize the working covariance matrix $\bSigma_{\mathrm{DLRT}}^{\mathrm{w}}$ in \eqref{eqn:w_cov_DLRT} to a block diagonal matrix $\bSigma_{\mathrm{BILT}}^{\mathrm{w}}$, defined as
\begin{equation*}
    \bSigma_{\mathrm{BILT}}^{\mathrm{w}}
    =
    \begin{bmatrix}
        \bSigma_{11} & \bzero          & \cdots & \bzero \\
        \bzero       & \bSigma_{22}    & \cdots & \bzero \\
        \vdots       & \vdots          & \ddots & \vdots \\
        \bzero       & \bzero          & \cdots & \bSigma_{KK}
    \end{bmatrix}
    \in \mathbb{R}^{p \times p}.
\end{equation*}
Under the working block independence assumption, the joint likelihood function for $\bmu_1$, $\bmu_2$, and $\bSigma_{\mathrm{BILT}}^{\mathrm{w}}$ is given by
\begin{equation*}
\begin{split}
L(\bmu_1, \bmu_2, \bSigma_{\mathrm{BILT}}^{\mathrm{w}})
&=
\prod_{k = 1}^K
(2\pi)^{-\frac{N}{2}}
\det(\bSigma_{kk})^{-\frac{N}{2}}
\exp\!\Bigg\{
-\frac{1}{2}
\Bigg[
\sum_{i = 1}^{n_1}
(\bX_{ik} - \bmu_{1k})^\top
\bSigma_{kk}^{-1}
(\bX_{ik} - \bmu_{1k})\\
&\phantom{\prod_{k = 1}^K
(2\pi)^{-\frac{N}{2}}
\det(\bSigma_{kk})^{-\frac{N}{2}}
\exp\!\Bigg\{
-\frac{1}{2}} \quad 
+
\sum_{i = 1}^{n_2}
(\bY_{ik} - \bmu_{2k})^\top
\bSigma_{kk}^{-1}
(\bY_{ik} - \bmu_{2k})
\Bigg]
\Bigg\}.
\end{split}
\end{equation*}
The BILT statistic $T_{\mathrm{BLRT}}$ is then define as minus twice the log-likelihood ratio,
\begin{equation*}
    T_{\mathrm{\mathrm{BILT}}} \coloneqq -2 \log(\Lambda_N)
    =
    -2 \log
    \left(
    \frac{
    \underset{\bmu_1, \bmu_2, \bSigma^{\mathrm{w}}_{\mathrm{BILT}}}{\max}
    L(\bmu_1, \bmu_2, \bSigma^{\mathrm{w}}_{\mathrm{BILT}} \mid H_0)
    }{
    \underset{\bmu_1, \bmu_2, \bSigma^{\mathrm{w}}_{\mathrm{BILT}}}{\max}
    L(\bmu_1, \bmu_2, \bSigma^{\mathrm{w}}_{\mathrm{BILT}})
    } 
    \right)
    = N \sum_{k = 1}^K \log\left(1 + \frac{A_{N,k}}{N - 2}\right),
\end{equation*}
where
\begin{equation*}
    A_{N,k}
    = \frac{n_1 n_2}{N} (\bar{\bX}_{k} - \bar{\bY}_k)^\top \bS_k^{-1} (\bar{\bX}_k - \bar{\bY}_k),
\end{equation*}
with $\bar{\bX}_k$ and $\bar{\bY}_k$ denoting the sample mean vectors corresponding to the $k$th block in the two groups, and $\bS_k$ the corresponding pooled sample covariance matrix.

\subsection{Asymptotic normality} \label{sub_sec_3.2}

Let $U_{N,k} = N \log\!\left(1 + A_{N,k}/(N-2)\right)$. We begin by deriving the mean and variance of $U_{N,k}$. Following arguments analogous to those used in \citet{hu2019diagonal}, we establish the Lemma~\ref{lemma1}. The proof of Lemma~\ref{lemma1} is provided in Appendix~\ref{Appendix_A}.
\begin{lemma}\label{lemma1}
    For any $N \geq \underset{1\leq k \leq K}{\max}\,p_k + 2$,
    \begin{align*}
    \Expect[U_{N,k}] &= N D_{p_k}(N-2), \\
    \mathrm{Var}(U_{N,k}) &= -2 N^2 D_{p_k}'(N-2),
    \end{align*}
    where $D_s(x) \coloneqq \sum_{j = x - s + 1}^{x} D(j) = \Psi\!\left((x+1)/2\right) - \Psi\!\left((x - s + 1)/2\right)$.
    Moreover,
    \[
    \Expect[U_{N,k}] \to p_k
    \quad \text{and} \quad
    \mathrm{Var}(U_{N,k}) \to 2p_k \quad \text{as}\quad N \to \infty.
    \]
\end{lemma}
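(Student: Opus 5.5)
Under $H_0$, $A_{N,k}$ is a block Hotelling $T^2$ statistic. The plan is to use its exact distribution to write $1+A_{N,k}/(N-2)$ as the reciprocal of a Beta variable, and then to read off the mean and variance of its logarithm from the digamma and trigamma functions. By the classical result recalled in the Introduction, applied to block $k$ with dimension $p_k$,
\[
\frac{N-p_k-1}{(N-2)p_k}A_{N,k}\sim F_{p_k,N-p_k-1}.
\]
Equivalently, with $B_k = \bigl(1+A_{N,k}/(N-2)\bigr)^{-1}$, we have $B_k\sim \mathrm{Beta}\bigl((N-p_k-1)/2,\,p_k/2\bigr)$. This is the Wilks-lambda representation; it requires $\bS_k$ to be nonsingular, which holds almost surely when $N-2\ge p_k$, matching the hypothesis $N\ge \max_k p_k+2$. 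Then $U_{N,k}=-N\log B_k$.

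For $B\sim\mathrm{Beta}(a,b)$ the standard identities are
\[
\Expect[-\log B]=\Psi(a+b)-\Psi(a), \qquad \mathrm{Var}(\log B)=\Psi'(a)-\Psi'(a+b).
\]
Set $a=(N-p_k-1)/2$ and $a+b=(N-1)/2$, and write $x=N-2$, so that $a+b=(x+1)/2$ and $a=(x-p_k+1)/2$. This gives
\[
\Expect[U_{N,k}]=N\bigl[\Psi((x+1)/2)-\Psi((x-p_k+1)/2)\bigr]=ND_{p_k}(N-2).
\]
For the variance, differentiate $D_{p_k}(x)$ in $x$:
\[
D_{p_k}'(x)=\tfrac12\bigl[\Psi'((x+1)/2)-\Psi'((x-p_k+1)/2)\bigr],
\]
so $\mathrm{Var}(U_{N,k})=N^2[\Psi'(a)-\Psi'(a+b)]=-2N^2D_{p_k}'(N-2)$. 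The telescoping identity $\sum_{j=x-s+1}^{x}D(j)=\Psi((x+1)/2)-\Psi((x-s+1)/2)$ follows directly from $D(j)=\Psi((j+1)/2)-\Psi(j/2)$: consecutive terms cancel, and differentiation is done on the closed form.

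For the limits, use the expansions $\Psi(z)=\log z-1/(2z)+O(z^{-2})$ and $\Psi'(z)=1/z+1/(2z^2)+O(z^{-3})$ with $p_k$ fixed and $N\to\infty$. For the mean,
\[
\Psi((x+1)/2)-\Psi((x-p_k+1)/2)=\log\frac{x+1}{x-p_k+1}+O(x^{-2})=\frac{p_k}{x}+O(x^{-2}),
\]
so $\Expect[U_{N,k}]=Np_k/(N-2)+O(1/N)\to p_k$. For the variance,
\[
\Psi'(a)-\Psi'(a+b)=\frac{2}{x-p_k+1}-\frac{2}{x+1}+O(x^{-2})=\frac{2p_k}{x^2}+O(x^{-3})
\]
at the orders that matter. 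Hence $N^2$ times this tends to $2p_k$; the $1/(2z^2)$ terms contribute only $O(x^{-3})$ after differencing.

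The main obstacle is the first step. One must verify that the block statistic has the exact $F$ (equivalently Beta) law under $H_0$ regardless of the off-block covariance, and that the degrees of freedom are exactly $(N-p_k-1)/2$ and $p_k/2$. This holds because $A_{N,k}$ depends only on the block-$k$ data, which are themselves i.i.d. $N_{p_k}$ with common covariance $\bSigma_{kk}$. Everything after that is routine special-function calculus; the only care needed there is tracking the differencing of the asymptotic expansions to second order for the variance.
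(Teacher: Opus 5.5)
Your proof is correct. For the exact mean and variance it follows the paper. The paper also writes $U_{N,k}$ in terms of an $F(p_k,N-p_k-1)$ variable and changes variables via $y=b/(ax+b)$ to reach a $\mathrm{Beta}\bigl((N-p_k-1)/2,\,p_k/2\bigr)$ integral. It then obtains the log-moments by differentiating the Beta function in its parameter (Lemma~\ref{lemmaA1}), which is exactly the content of the digamma/trigamma identities you quote.

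The routes differ for the limits. The paper does not reuse the closed forms. Instead it returns to the $F$-density integral, applies Stirling's formula to the gamma-ratio normalizing constant, and uses dominated convergence on the remaining integral. This shows $\Expect[U_{N,k}]\to p_k$ and $\Expect[U_{N,k}^2]\to p_k(p_k+2)$, the first two moments of $\chi^2_{p_k}$.

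You instead expand $\Psi$ and $\Psi'$ at large argument inside $D_{p_k}(N-2)$ and $D'_{p_k}(N-2)$. This is shorter and stays entirely within the exact formulas you just derived. It also gives explicit $O(1/N)$ rates and avoids having to exhibit a dominating function, which the paper leaves implicit. The paper's route, in exchange, shows why the limits are $p_k$ and $2p_k$: $U_{N,k}$ behaves like a $\chi^2_{p_k}$ variable, with its first two moments converging.

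You also check the telescoping identity that defines $D_s$, which the paper only asserts.
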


Now, we turn to show the asymptotic normality of $T_{\mathrm{BILT}} = \sum_{k = 1}^K U_{N,k}$. Similar to the DLRT statistic, the sequence $\{U_{N,k}\}_{k=1}^K$ is not mutually independent, and thus the classical central limit theorem does not apply directly. To address this issue, we consider the following regularity condition:
\begin{itemize}
    \item[(C1$^\prime$)]
    The stationary sequence $\{U_{N,k}\}$ satisfies the strong mixing condition such that
    \begin{equation*}
        \alpha_{\mathrm{BILT}}(r) \coloneqq \sup_{1 \le l \le K-r}
        \alpha\!\left(\mathcal{F}_1^l, \mathcal{F}_{l+r}^K\right) \to 0 
        \quad \text{as } r \to \infty.
    \end{equation*}

    \item[(C2$^\prime$)]
    For some $\delta > 0$, $\sum_{r=1}^\infty \alpha_{\mathrm{BILT}}(r)^{\delta/(2+\delta)} < \infty$, and for any $l \geq 0$, the following limit exists:
    \begin{equation*}
        \gamma_{\mathrm{BILT}}(l) := \lim_{K \to \infty} \frac{1}{K-l}
        \sum_{k=1}^{K-l} \mathrm{Cov}\!\left(U_{N,k}, U_{N,k+l}\right).
    \end{equation*}
\end{itemize}

\noindent Under these regularity conditions, we establish the asymptotic normality of the BILT statistic as stated in Theorem~\ref{thm3}, whose proof is given in Appendix~\ref{Appendix_B}.

\begin{theorem} \label{thm3}
    Assume that the block size $p_k$ is fixed for some positive integer $b$ for all $k$. Furthermore, assume that the condition (C1$^\prime$) and (C2$^\prime$) hold and that the sequence $\{U_{N,k}\}$ is stationary.  Then, for any fixed $N \geq b + 3$ and under the null hypothesis,
    \begin{equation*}
        \frac{T_{\mathrm{BILT}} - KND_b(N-2)}{\tau_{\mathrm{BILT}}\sqrt{K}} \;\overset{d}{\to}\; N(0,1) \quad \text{as} \quad p \to \infty,
    \end{equation*}
    where $\tau_{\mathrm{BILT}}^2 = -2N^2D^\prime_{b}(N-2) + 2\sum_{l = 1}^\infty \gamma_{\mathrm{BILT}}(l)$.
\end{theorem}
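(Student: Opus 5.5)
The proof will mirror that of Theorem~\ref{thm1} in \citet{hu2019diagonal}. We apply a central limit theorem for stationary strongly mixing sequences (Ibragimov's CLT, as in Theorem~18.5.3 of Ibragimov and Linnik) to the array $\{U_{N,k}\}_{k=1}^K$. Here $N$ is fixed and $K = p/b \to \infty$ as $p\to\infty$. Write
\[
T_{\mathrm{BILT}} - K N D_b(N-2) = \sum_{k=1}^K \bigl(U_{N,k} - \Expect[U_{N,k}]\bigr).
\]
This identity uses Lemma~\ref{lemma1} with $p_k=b$, which gives $\Expect[U_{N,k}] = N D_b(N-2)$ for every $k$. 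That lemma requires $N \ge b+2$, and the hypothesis $N\ge b+3$ more than covers it. So the statistic, once centered, is a partial sum of a mean-zero stationary sequence. The CLT then requires three things: (i) a finite moment of order $2+\delta$ for $U_{N,k}$, (ii) summability of $\alpha_{\mathrm{BILT}}(r)^{\delta/(2+\delta)}$, and (iii) identification and positivity of the long-run variance $\tau_{\mathrm{BILT}}^2$.

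For (i), note that under $H_0$ we have $\frac{N-b-1}{(N-2)b}A_{N,k}\sim F_{b,N-b-1}$. Hence $1 + A_{N,k}/(N-2) = 1/B$ with $B\sim \mathrm{Beta}((N-b-1)/2,\, b/2)$, and so $U_{N,k} = -N\log B$. All moments of $-\log B$ are finite because the Beta density near $0$ behaves like $x^{(N-b-1)/2-1}$ with $(N-b-1)/2>0$. In particular $\Expect|U_{N,k}|^{2+\delta}<\infty$ for every $\delta>0$. The same representation also gives the variance formula in Lemma~\ref{lemma1}, since $\mathrm{Var}(\log B)$ is a difference of trigamma values. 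Condition (ii) is exactly (C2$'$).

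For (iii), expand
\[
\mathrm{Var}\Bigl(\sum_{k=1}^K U_{N,k}\Bigr) = K\,\mathrm{Var}(U_{N,1}) + 2\sum_{l=1}^{K-1}\sum_{k=1}^{K-l}\mathrm{Cov}(U_{N,k},U_{N,k+l}).
\]
Divide by $K$. Stationarity makes each inner sum equal to $(K-l)\gamma_{\mathrm{BILT}}(l)$, and the limit in (C2$'$) exists. We then need to pass the limit inside the sum over $l$. To do so we use Davydov's covariance inequality,
\[
|\mathrm{Cov}(U_{N,k},U_{N,k+l})| \le C\,\alpha_{\mathrm{BILT}}(l)^{\delta/(2+\delta)}\,\|U_{N,1}\|_{2+\delta}^2.
\]
This gives a summable dominating sequence, so dominated convergence yields
\[
K^{-1}\mathrm{Var}(T_{\mathrm{BILT}}) \to -2N^2D_b'(N-2) + 2\sum_{l\ge1}\gamma_{\mathrm{BILT}}(l) = \tau_{\mathrm{BILT}}^2.
\]
With $\tau_{\mathrm{BILT}}^2>0$ (assumed implicitly, as in \citet{hu2019diagonal}), Ibragimov's theorem gives the stated convergence.

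The main obstacle is the change of reference sequence in the mixing framework. Formally the sequence $\{U_{N,k}\}$ depends on $p$ through $K$, so the CLT is really for a triangular array. We handle this as Hu et al.\ do: treat $\{U_{N,k}\}_{k\ge1}$ as a single infinite stationary sequence, as implicitly assumed in (C1$'$)--(C2$'$). The remaining care is only in the elementary Beta moment check, where $N \ge b+3$ ensures the relevant degrees of freedom are positive.
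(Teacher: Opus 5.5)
Your proposal is correct and follows essentially the same route as the paper: it applies a CLT for stationary strongly mixing sequences (the paper cites Corollary~5.1 of Hall and Heyde) and reduces to checking a finite $(2+\delta)$-th moment of $U_{N,k}$. The paper verifies that moment through the same Beta change of variables $y=(N-b-1)/(bx+N-b-1)$ that underlies your representation $U_{N,k}=-N\log B$. Your Davydov-based identification of $\tau_{\mathrm{BILT}}^2$ and your remark on its positivity only spell out what the cited corollary already supplies, so they do not change the argument.
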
 
 
\begin{remark}
    BILT may be viewed as an extension of DLRT in that, when each block is restricted to size one, it reduces to DLRT. For block sizes greater than one, BILT incorporates dependence information into the construction of the test statistic while preserving asymptotic validity. Consequently, when the true covariance matrix $\bSigma$ deviates from the identity matrix $\bI_p$, BILT can achieve higher power than DLRT. Empirical results further indicate that even with a block size of two, BILT yields noticeable power gains; see Section~\ref{sec_4} for details.
\end{remark}

\begin{remark}\label{remark_2}
    As the block size $b$ increases, the block working independence assumption in our model becomes weaker, leading to the lower model bias. However, simultaneously, the effective sample size $K$, which determines the convergence rate of $T_{\mathrm{BILT}}=\sum_{k=1}^K U_{N,k}$, decreases, resulting in increased model variance. Therefore, it is necessary to find an optimal $b$ which maximizes test performance. To this end, traditional parameter selection methods, such as cross-validation, may be employed. In Section \ref{sec_4}, we examine the variation in test performance according to block size in greater detail through simulation studies.
\end{remark}
\subsection{Consistent estimator for $\tau_{\mathrm{BILT}}^2$}\label{sub_sec_3.3}
Suppose that the block size $p_k$ is fixed at $b \ge 1$ for all $k$ and that a consistent estimator of $\tau_{\mathrm{BILT}}^2$, denoted by $\hat{\tau}_{\mathrm{BILT}}^2$, is available. Then, by Slutsky’s theorem together with Theorem~\ref{thm3}, the standardized BILT statistic $\tilde{Z}_{\mathrm{BILT}}$ converges in distribution to the standard normal distribution as $p \to \infty$:
\begin{equation*}
    \tilde{Z}_{\mathrm{BILT}}
    := \frac{T_{\mathrm{BILT}} - K N D_b (N-2)}{\hat{\tau}_{\mathrm{BILT}} \sqrt{K}}
    \;\overset{d}{\to}\; N(0,1)
    \quad \text{as} \quad p \to \infty .
\end{equation*}
Accordingly, we can construct an asymptotically valid test that rejects the null hypothesis whenever $|\tilde{Z}_{\mathrm{BILT}}| \ge z_{q/2}$, where $z_q$ denotes the upper $q$-quantile of the standard normal distribution and $q$ is the nominal significance level.

To implement the above decision rule, a consistent estimator of $\tau_{\mathrm{BILT}}^2$ is required. Let $\bar{U} = \frac{1}{K} \sum_{k=1}^K U_{N,k}$ and define the sample lag-$l$ covariance by
\begin{equation*}
    \widehat{\gamma}(l) 
    = \frac{1}{K-l} \sum_{k=1}^{K-l} 
    \bigl(U_{N,k} - \bar{U}\bigr)
    \bigl(U_{N,k+l} - \bar{U}\bigr).
\end{equation*}
Based on these quantities, we consider a kernel-based long-run variance estimator of $\tau_{\mathrm{BILT}}^{2}$ defined as
\begin{equation}\label{eqn:kernel_based_est}
    \hat{\tau}_{\mathrm{BILT}}^2 
    = -2N^2 D_b^{\prime}(N-2) 
    + 2 \sum_{l=1}^{L} 
    w\!\left(\frac{l}{L}\right)
    \widehat{\gamma}(l),
\end{equation}
where $w(\cdot)$ is a kernel function and $L$ is a bandwidth parameter. Various kernel choices have been studied in the literature, including the Parzen kernel and the quadratic spectral kernel; see \citet{andrews1991heteroskedasticity} for a comprehensive discussion. In particular, taking $w(x)=\mathbb{I}(|x| \leq 1)$ yields the sample-based estimator.

Under mild regularity conditions, kernel-based long-run variance estimators of the form \eqref{eqn:kernel_based_est} are known to be consistent; see, for example, \citet{newey1986simple}, \citet{andrews1991heteroskedasticity}, and \citet{jansson2002consistent}. Accordingly, we adopt the estimator in \eqref{eqn:kernel_based_est} as a consistent estimator of $\tau_{\mathrm{BILT}}^2$.

\subsection{Asymptotic Power of the BILT}\label{sub_sec_3.4}
In this section, we analyze the asymptotic power of the BILT under a high-dimensional regime in which both $N$ and $p$ diverge to infinity. As in \cite{hu2019diagonal}, we consider a local alternatives defined by
\begin{equation}\label{eqn:local_alter_BILT}
    \bmu_1 - \bmu_2 = \sqrt{\frac{N}{n_1n_2}} \bdelta,
\end{equation}
where $\bdelta = (\bdelta_1^\top, \ldots, \bdelta_K^\top)^\top \in \mathbb{R}^p$. Let 
$\bDelta = (\Delta_1, \ldots, \Delta_K)^\top 
= (\bdelta_1^\top \bSigma_{11}^{-1}\bdelta_1, \ldots, \bdelta_K^\top \bSigma_{KK}^{-1}\bdelta_K)^\top \in \mathbb{R}^K$, 
and assume that its components are uniformly bounded in the sense that
\begin{equation}\label{eqn:unif_bound_BILT}
    \underset{1\leq k \leq K}{\max} \Delta_k \leq M_{\mathrm{BILT}}, 
\end{equation}
where $M_{\mathrm{BILT}}$ is a constant independent of $N$ and $p$. Under these conditions, Theorem~\ref{thm4} characterizes the asymptotic power of the BILT.

\begin{theorem}\label{thm4}
Assume that the block size $p_k$ is fixed to some positive integer $b$ for all $k$ and that the number of blocks $K$ increases at such a rate that $K = o(N^2)$. If the sequence $\{U_{N, k}\}$ is stationary and satisfies conditions (C1$^\prime$) and (C2$^\prime$), then under the local alternative \eqref{eqn:local_alter_BILT} and condition \eqref{eqn:unif_bound_BILT}, the asymptotic power of the level-$q$ test is given by
\begin{equation*}
    \beta(T_{\mathrm{BILT}}) 
    = 1 - \Phi\left(
        z_q - \frac{\bDelta^\top \bDelta / \sqrt{K}}{\tau_{\mathrm{BILT}}}
      \right) 
    \quad \text{as}  \quad  (N, p) \to \infty.
\end{equation*}
\end{theorem}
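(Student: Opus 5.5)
Following the strategy behind Theorem~\ref{thm2} in \citet{hu2019diagonal}, I will decompose the standardized statistic under the local alternative \eqref{eqn:local_alter_BILT} into a null-like fluctuation and a deterministic drift. For each block, $\bar{\bX}_k-\bar{\bY}_k \sim N_{b}\bigl(\sqrt{N/(n_1n_2)}\,\bdelta_k, (N/(n_1n_2))\bSigma_{kk}\bigr)$, which is independent of $\bS_k$, and $(N-2)\bS_k$ is Wishart with $N-2$ degrees of freedom. Therefore
\[
A_{N,k}\frac{N-2-b+1}{b(N-2)}
\]
is a noncentral $F_{b,N-b-1}$ variable with noncentrality $\Delta_k$. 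Write $A_{N,k}=Q_k/W_k$, where $Q_k=\|\bSigma_{kk}^{-1/2}\sqrt{n_1n_2/N}(\bar{\bX}_k-\bar{\bY}_k)\|^2\sim\chi^2_b(\Delta_k)$, and $W_k=(N-2)^{-1}\cdot$ a reciprocal-type chi-square quantity concentrated at $1$ with fluctuations of order $N^{-1/2}$.

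\textbf{Step 1: expansion of the summand.} Expand $U_{N,k}=N\log(1+A_{N,k}/(N-2))=\frac{N}{N-2}A_{N,k}-\frac{N}{2(N-2)^2}A_{N,k}^2+R_k$. Taking expectations under the alternative gives $\Expect[U_{N,k}]=ND_b(N-2)+\Delta_k+O(1/N)$ uniformly in $k$, using $\Expect A_{N,k}=(b+\Delta_k)(N-2)/(N-b-3)$ and the boundedness $\Delta_k\le M_{\mathrm{BILT}}$ from \eqref{eqn:unif_bound_BILT} to control the quadratic and remainder terms. Similarly, $\mathrm{Var}(U_{N,k})$ and the lag covariances differ from their null counterparts by $O(\Delta_k)$ terms that are bounded uniformly. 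The working assumption is that these perturbations do not affect the leading standardization as $N\to\infty$, so that $\tau_{\mathrm{BILT}}$ remains the right scale, exactly as in the DLRT argument where the limiting variance computations are unchanged.

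\textbf{Step 2: centering.} I then write
\[
\frac{T_{\mathrm{BILT}}-KND_b(N-2)}{\tau_{\mathrm{BILT}}\sqrt K}=\frac{\sum_k(U_{N,k}-\Expect U_{N,k})}{\tau_{\mathrm{BILT}}\sqrt K}+\frac{\sum_k\Delta_k}{\tau_{\mathrm{BILT}}\sqrt K}+\frac{K\cdot O(1/N)}{\sqrt K}.
\]
The last term is $O(\sqrt K/N)=o(1)$, and this is precisely where $K=o(N^2)$ is used. The first term is asymptotically $N(0,1)$ by the mixing CLT already used for Theorem~\ref{thm3} under (C1$'$) and (C2$'$) and stationarity. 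The main obstacle is that $N$ now grows, so the CLT must hold for a triangular array. I would handle this by invoking the Ibragimov-type CLT for $\alpha$-mixing arrays, which needs a uniform $(2+\delta)$-moment bound on $U_{N,k}$. That bound follows from $U_{N,k}\le \frac{N}{N-2}A_{N,k}$ together with uniformly bounded moments of noncentral $F$ variables with bounded noncentrality, for $N$ large. Rejecting when the standardized statistic exceeds $z_q$, the power is $1-\Phi(z_q-\sum_k\Delta_k/(\tau_{\mathrm{BILT}}\sqrt K))$.

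\textbf{Step 3: matching the stated form.} To express the drift in the stated form, I would read $\bDelta^\top\bDelta$ as the aggregate signal $\sum_k\Delta_k$. In the DLRT case with $b=1$ this is exactly $\sum_j\delta_j^2/\sigma_{jj}$, and it corresponds to $\bDelta^\top\bDelta$ in Theorem~\ref{thm2}, where $\bDelta$ collects the square roots of these per-coordinate quantities. Making this identification consistent with the paper's definition of $\bDelta$ is a notational point I would check carefully against the statement.
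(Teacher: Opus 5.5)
Your proposal is correct and follows essentially the same route as the paper's proof in Appendix~\ref{Appendix_C}. Both use the noncentral $F$ representation of $U_{N,k}$ and second-order bounds on $\log(1+x)$ to get $\Expect[U_{N,k}] = b + \Delta_k + O(1/N)$. Both then split the statistic into drift plus fluctuation, with $K = o(N^2)$ removing the $O(\sqrt{K}/N)$ term, and apply the mixing CLT. You are more careful than the paper, which asserts that CLT under the alternative without addressing the triangular array or whether $\tau_{\mathrm{BILT}}$ remains the right scale. Your working assumption on that point can be justified where the limit is nontrivial, since there $K^{-1}\sum_k\Delta_k = O(K^{-1/2}) \to 0$.

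Your Step~3 reading is also what the paper actually uses. Its proof passes from $G_k = b + \Delta_k + O(1/N)$ to $\bar{G} = b + K^{-1}\bDelta^\top\bDelta + O(1/N)$. With $\Delta_k = \bdelta_k^\top\bSigma_{kk}^{-1}\bdelta_k$, the drift is therefore really $\sum_k\Delta_k$, and the stated $\bDelta^\top\bDelta$ is correct only if $\bDelta$ is taken to have entries $\Delta_k^{1/2}$.
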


\noindent Theorem~\ref{thm4} implies that $\beta(T_{\mathrm{BILT}}) \to 1$ if $\sqrt{K} = o(\bDelta^\top \bDelta)$, whereas $\beta(T_{\mathrm{BILT}}) \to q$ if $\bDelta^\top \bDelta = o(\sqrt{K})$. The proof is given in Appendix~\ref{Appendix_C}.

\section{Numerical Study} \label{sec_4}
In this section, we conduct numerical experiments to assess the performance of BILT and compare it with existing methods. We generate independent samples $\bX_i \overset{\mathrm{i.i.d.}}{\sim} N_p(\bmu_1, \bSigma)$, $i = 1, \ldots, n_1$, and $\bY_i \overset{\mathrm{i.i.d.}}{\sim} N_p(\bmu_2, \bSigma)$, $i = 1, \ldots, n_2$. Throughout, we set $\bmu_1 = \bzero$. Unless otherwise specified, each component of $\bmu_2$ is independently drawn from $\{-\delta/\sqrt{p},\, \delta/\sqrt{p}\}$ with probability $1/2$ each, where $\delta > 0$ controls the signal magnitude. We consider four covariance structures for $\bSigma$, defined as follows:
\begin{itemize}
    \item \textbf{Independent} (Ind): $\bSigma = \bI_p$.

    \item \textbf{Autoregressive} (AR): $(\bSigma)_{ij} = \rho^{|i-j|}$ for $1 \le i,j \le p$.

    \item \textbf{Block-diagonal} (BD): $\bSigma$ is block-diagonal with common block size $4$. 
    Specifically, let $\lfloor x \rfloor$ denote the greatest integer less than or equal to $x$. Then
    \begin{equation*}
    (\bSigma)_{ij}=
    \begin{cases}
    1, & i=j,\\
    \rho, & i\neq j \ \text{and}\ 
    \left\lfloor\frac{i-1}{4}\right\rfloor=
    \left\lfloor\frac{j-1}{4}\right\rfloor,\\
    0, & \text{otherwise}.
    \end{cases}
    \end{equation*}

    \item \textbf{Band} (BAND): $\bSigma$ is a banded covariance matrix with bandwidth $4$, that is,
    \begin{equation*}
    (\bSigma)_{ij}=
    \begin{cases}
    1, & i=j,\\
    \rho, & 0<|i-j|\le 4,\\
    0, & |i-j|>4.
    \end{cases}
    \end{equation*}
\end{itemize}
\noindent For the AR and BD structures, we consider $\rho \in \{0.3,\, 0.6\}$, denoted by AR\_0.3, AR\_0.6 and BD\_0.3, BD\_0.6, respectively. For the BAND structure, we fix $\rho = 0.3$, denoted by BAND\_0.3. 

Under each configuration, we compute the standardized BILT statistic
\begin{equation*}
    \tilde{Z}_{\mathrm{BILT}}
    =
    \frac{
    T_{\mathrm{BILT}} - K N D_b (N-2)
    }{
    \hat{\tau}_{\mathrm{BILT}}\sqrt{K}
    },
\end{equation*}
where $\hat{\tau}_{\mathrm{BILT}}^2$ is a kernel-based estimator of $\tau_{\mathrm{BILT}}^2$. Throughout the simulation studies and real data analyses, we employ the Parzen kernel with fixed bandwidth $L = 5$. The null hypothesis $H_0:\bmu_1 = \bmu_2$ is rejected at nominal level $q$ whenever $|\tilde{Z}_{\mathrm{BILT}}| > z_{q/2}$. The empirical type~I error rate and power are estimated as the proportion of rejections over $r$ Monte Carlo replications. Unless otherwise stated, we set $q = 0.05$ and $r = 3{,}000$.

\subsection{Asymptotic Normality of the Standardized BILT Statistic}\label{sub_sec_4.1}
We first investigate the asymptotic normality of the standardized BILT statistic $\tilde{Z}_{\mathrm{BILT}}$ under the null hypothesis $\delta = 0$. We set $n_1 = n_2 = 50$ and $p = 1{,}400$, and examine the null distribution under two covariance structures, AR\_0.6 and BD\_0.6. BILT is implemented with block sizes $b \in \{1, 2, 5, 10\}$. 
For each configuration, we compute $\tilde{Z}_{\mathrm{BILT}}$ based on $5{,}000$ independent replications and display the resulting histograms together with the standard normal density. The results under AR\_0.6 are presented in Figure~\ref{fig:1}. The results under BD\_0.6, which show similar patterns, are reported in Appendix~\ref{Appendix_D}.

Several patterns can be observed in Figure~\ref{fig:1}: When the block size is $b = 1$, corresponding to DLRT, the histogram of $\tilde{Z}_{\mathrm{BILT}}$ deviates noticeably from the standard normal density, particularly around zero and in the tails. As the block size increases, the histogram aligns more closely with the standard normal density. Even when $b = 2$, a clear improvement is observed. These findings provide empirical support for the asymptotic normality of BILT established in Theorem~\ref{thm3} and suggest that block sizes larger than $1$ may lead to a more accurate finite-sample approximation to the limiting normal distribution.

\begin{figure}[!htb]
    \centering
    \includegraphics[width=\linewidth]{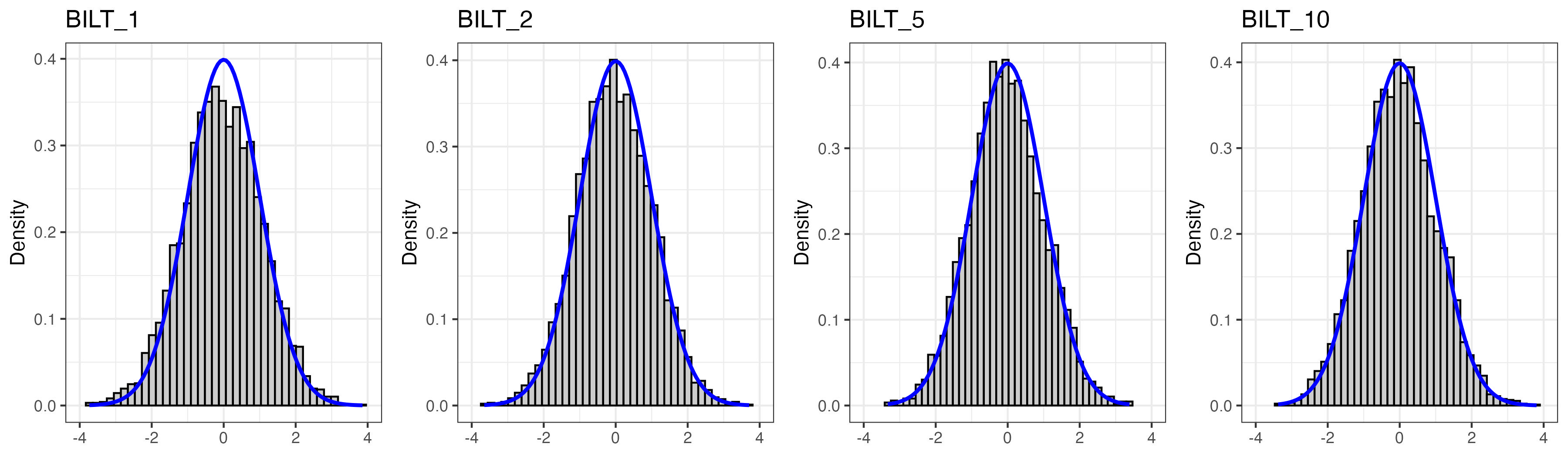}
    \caption{Histogram of the standardized BILT statistics based on 5{,}000 replications, overlaid with the standard normal density, under the AR\_0.6 covariance structure. From left to right, the panels present the results for BILT with block sizes $b = 1, 2, 5,$ and $10$, respectively.}
    \label{fig:1}
\end{figure}

\subsection{Type I Error Control}\label{sub_sec_4.2}
Next, we examine the type~I error of BILT as the dimension $p \in \{200, 400, \ldots, 1{,}400\}$ increases to assess its asymptotic validity. We consider two sample sizes, $n_1 = n_2 = 50$ and $n_1 = n_2 = 100$, and compare the type~I error rates across six covariance structures and block sizes $b \in \{1, 2, 5, 10\}$. Figure~\ref{fig:2} presents the results for $n_1 = n_2 = 50$. The results for $n_1 = n_2 = 100$, which show similar trends, are provided in Appendix~\ref{Appendix_D}.

Figure~\ref{fig:2} shows that when $b = 1$, corresponding to DLRT, BILT controls the type~I error close to the nominal level of $0.05$ under relatively weak dependence structures, such as IND, AR\_0.3, and BD\_0.3, across all dimensions $p$. However, as the dependence becomes stronger---namely under BAND\_0.3, AR\_0.6, and BD\_0.6---the type~I error deviates noticeably from the nominal level, regardless of $p$. 
In contrast, for $b > 1$, the type~I error tends to approach the nominal level as $p$ increases, irrespective of the covariance structure considered. This improvement is particularly pronounced under stronger dependence structures such as BAND\_0.3, AR\_0.6, and BD\_0.6. Notably, BILT with $b = 2$ consistently achieves superior or at least comparable type~I error control relative to $b = 1$ across all simulation settings. These findings suggest that $b = 2$ serves as a reasonable default choice when the underlying dependence structure is unknown.
When the dimension $p$ is small, however, BILT with block sizes $b \in \{5, 10\}$ fails to adequately control the type~I error. This behavior appears to stem from instability in the variance estimator of $T_{\mathrm{BILT}}$ when $p$ is small relative to the block size, as discussed in Remark~\ref{remark_2}.

\begin{figure}[!htb]
    \centering
    \includegraphics[width=\linewidth]{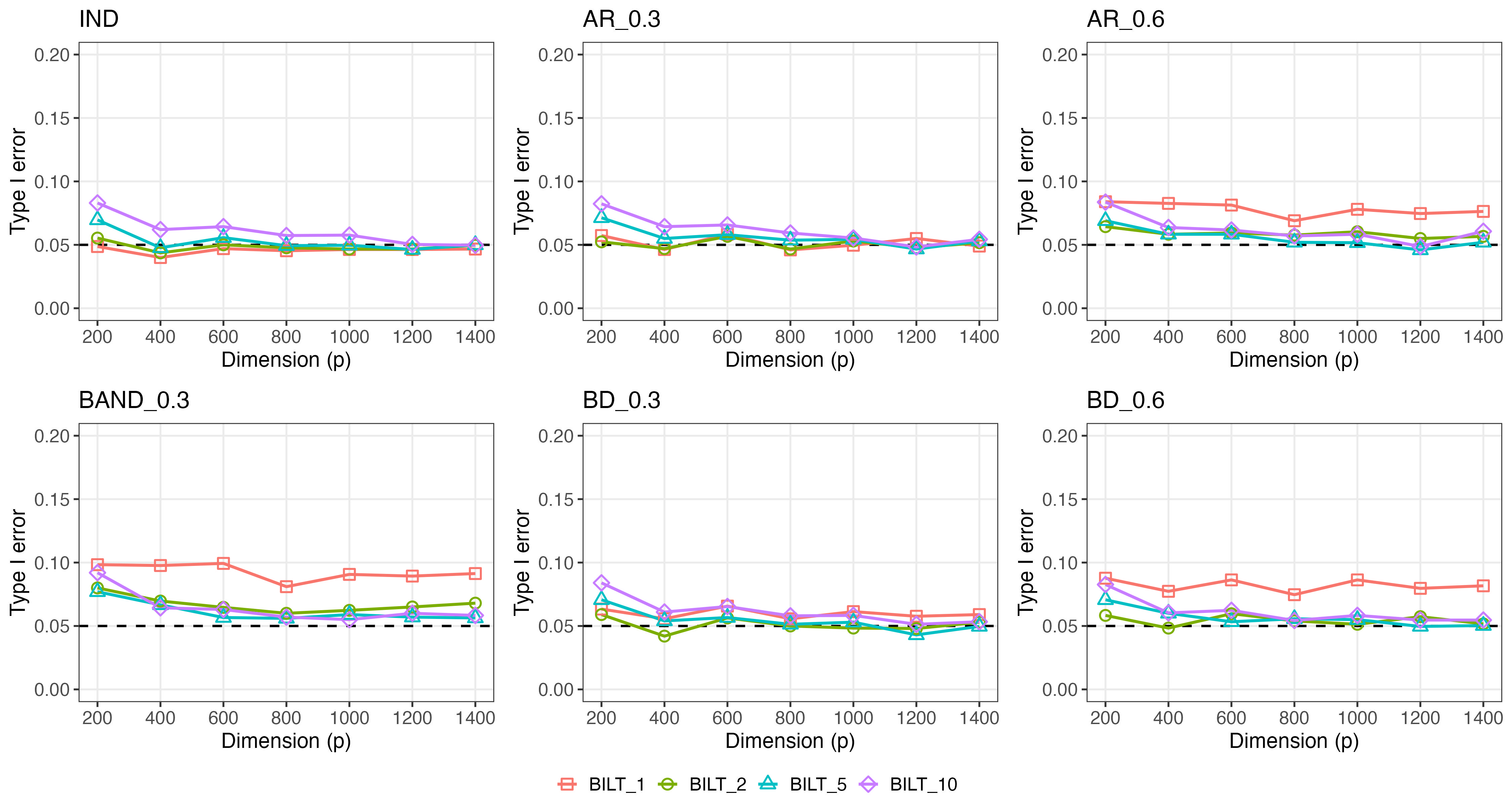}
    \caption{Type~I error of BILT with $n_1 = n_2 = 50$. Each panel corresponds to a covariance structure. Within each panel, results for block sizes $b \in \{1, 2, 5, 10\}$ are displayed. The x-axis represents the dimension $p$, and the y-axis represents the Type~I error obtained from 3{,}000 replications. The horizontal dashed line indicates the nominal Type~I error level of $0.05$.}
    \label{fig:2}
\end{figure}

\subsection{Power Curve Against Signal Magnitude}\label{sub_sec_4.3} 
Now, we examine the power of BILT as the signal magnitude $\delta/\sqrt{p}$ increases. With $p = 500$ fixed and $n_1 = n_2 \in \{50, 100\}$, we vary $\delta$ to generate different signal magnitudes and compare empirical power across six covariance structures and block sizes $b \in \{1, 2, 5, 10\}$.
The results for $n_1 = n_2 = 50$ are presented in Figure~\ref{fig:3}. The results for $n_1 = n_2 = 100$, which show similar trends, are reported in Appendix~\ref{Appendix_D}, 

From Figure~\ref{fig:3}, several observations emerge. 
Overall, power increases monotonically with the signal magnitude, which is both natural and expected. 
When the common covariance structure is IND, the working independence assumption is exactly satisfied, and BILT with block size $b = 1$ corresponds to the oracle case. Nevertheless, the four methods exhibit nearly identical power curves, indicating that increasing the block size $b$ under independence does not adversely affect the power of BILT. 
In contrast, for the AR, BD, and BAND covariance structures, increasing the block size consistently improves power, with the improvement becoming more pronounced under stronger dependence structures, namely BAND\_0.3, AR\_0.6, and BD\_0.6. The power gain relative to DLRT can be attributed to relaxing the working independence assumption toward a blockwise dependence structure, thereby more accurately capturing within-block correlations in constructing the test statistic. 
Notably, this improvement remains evident even for the relatively small block size $b = 2$. Taken together with the type~I error results in Figure~\ref{fig:2}, these findings further reinforce the earlier conclusion that BILT with $b = 2$ provides a sensible default choice when the underlying covariance structure is unknown.

\begin{figure}[!htb]
    \centering
    \includegraphics[width=\linewidth]{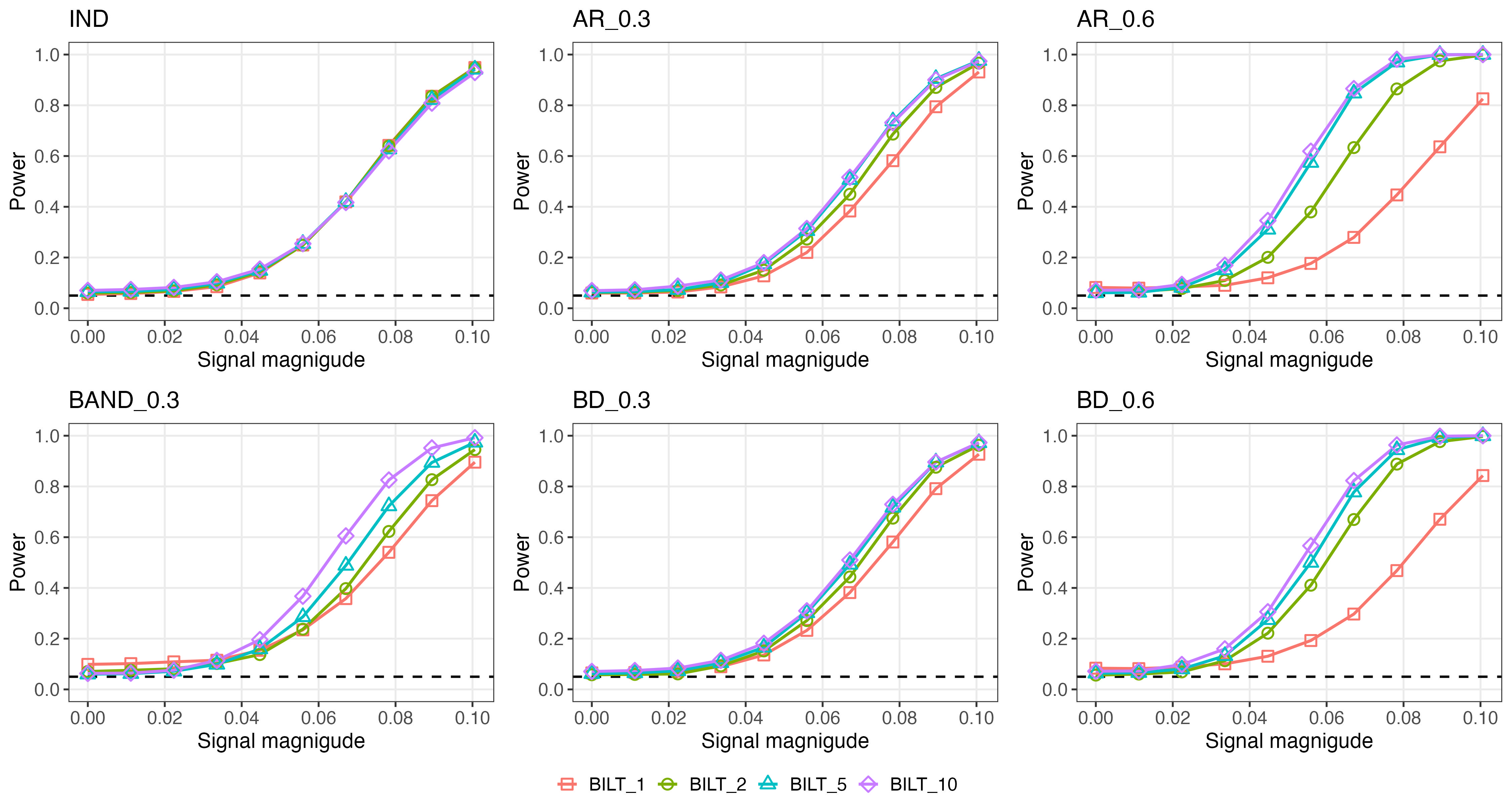}
    \caption{Power of BILT with $n_1 = n_2 = 50$ and $p = 500$. Each panel corresponds to a covariance structure. Within each panel, results for block sizes $b \in \{1, 2, 5, 10\}$ are displayed. The x-axis represents the signal magnitude $\delta/\sqrt{p}$, and the y-axis represents the power obtained from 3{,}000 replications. The horizontal dashed line indicates the nominal Type~I error level of $0.05$.}
    \label{fig:3}
\end{figure}

\subsection{Power Curve Against Non-Null Proportion}\label{sub_sec_4.4}
We further examine power of BILT as the non-null proportion increases. To this end, we randomly select a subset of $\{1,2,\ldots,p\}$ according to a prespecified non-null proportion. The components $\mu_{2j}$ corresponding to the selected indices are independently assigned values in $\{-\delta/\sqrt{p},\, \delta/\sqrt{p}\}$ with probability $1/2$ each, while the remaining components are set to zero. Consequently, as the non-null proportion increases, the signal becomes progressively denser. We fix $p = 1{,}000$ and $\delta = 3$, and consider sample sizes $n_1 = n_2 \in \{50, 100\}$. We then compare empirical power across six common covariance structures and block sizes $b \in \{1, 2, 5, 10\}$. The results for $n_1 = n_2 = 50$ are presented in Figure~\ref{fig:4}; those for $n_1 = n_2 = 100$ show similar patterns and are provided in Appendix~\ref{Appendix_D}.

Figure~\ref{fig:4} displays patterns similar to those observed in Figure~\ref{fig:3}. 
Overall, power increases monotonically as the non-null proportion increases, reflecting the fact that a denser signal strengthens the aggregated mean difference and thereby enhances detectability. 
When the common covariance structure is IND, all four methods exhibit nearly identical power curves across all non-null proportions. 
In contrast, under the remaining covariance structures, power generally increases with the block size $b$, and this effect becomes particularly pronounced under stronger dependence structures, such as BAND\_0.3, AR\_0.6, and BD\_0.6. The improvement in power is still evident even for $b = 2$.

\begin{figure}[!htb]
    \centering
    \includegraphics[width=\linewidth]{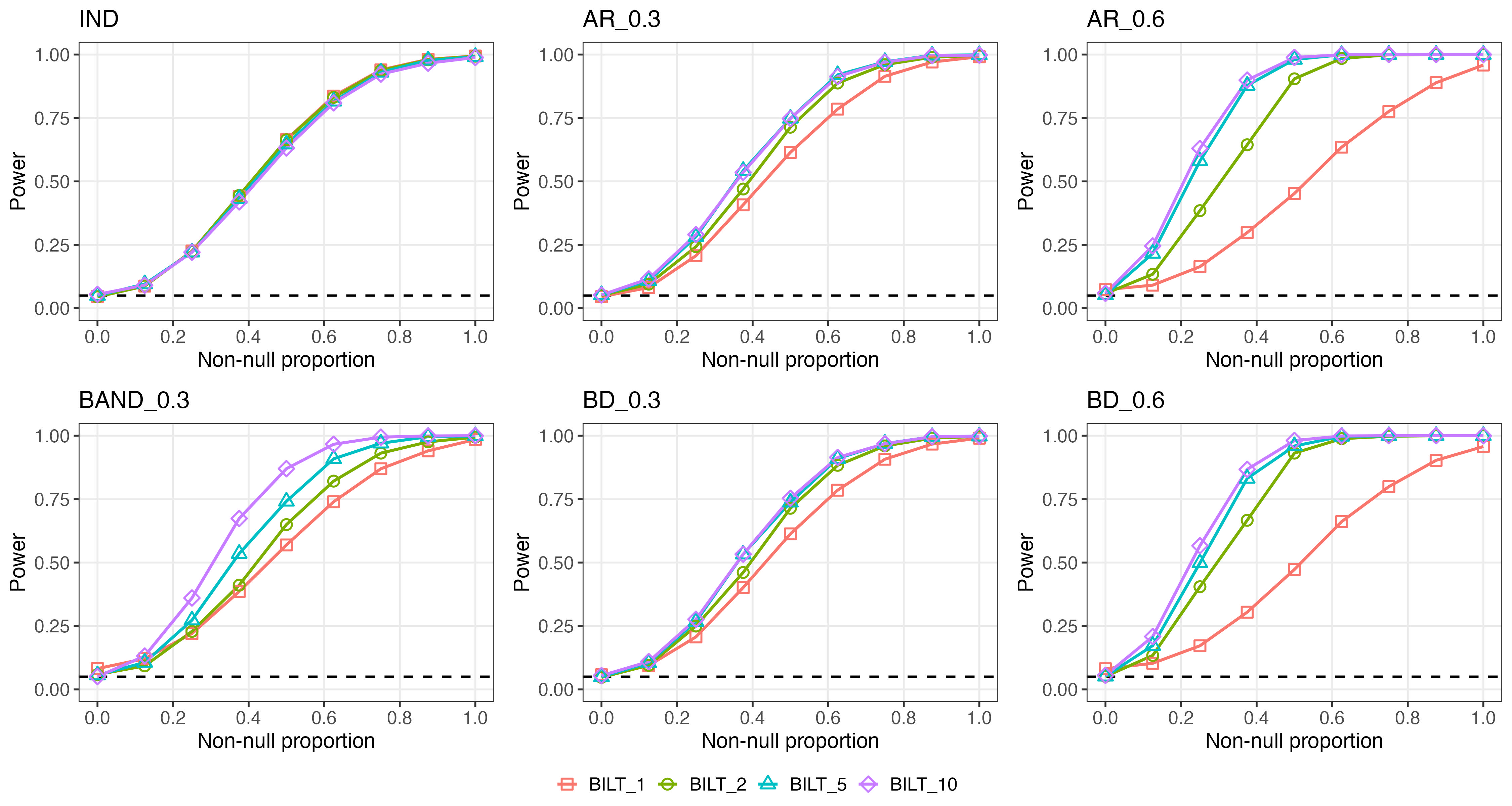}
    \caption{Power of BILT with $n_1 = n_2 = 50$ and $p = 1{,}000$. Each panel corresponds to a covariance structure. Within each panel, results for block sizes $b \in \{1, 2, 5, 10\}$ are displayed. The x-axis represents the non-null proportion, and the y-axis represents the power obtained from 3{,}000 replications. The horizontal dashed line indicates the nominal Type~I error level of $0.05$.}
    \label{fig:4}
\end{figure}

\subsection{Comparison with Other Methods}\label{sub_sec_4.5}
Finally, we compare the Type I error and power of BILT with block size $b = 2$, denoted by BILT\_2, with that of six existing tests: the BS test \citep{bai1996effect}, the SD test \citep{srivastava2008test}, the CQ test \citep{chen2010two}, the ARHT test \citep{li2020adaptable}, the aSPU test \citep{xu2016adaptive} and the DLRT \citep{hu2019diagonal}.
The BS, SD, CQ, and aSPU tests are implemented via the \texttt{R} package \texttt{highmean}, whereas the ARHT test is implemented via the \texttt{R} package \texttt{ARHT}.

We assess the power of each method under various non-null proportion, following the simulation design in Section~\ref{sub_sec_4.4}. We fix $n_1 = n_2 = 50$ and $p = 500$, and set $\delta$ to $2.25$. In addition, we introduce heterogeneity in the variances. Specifically, the diagonal elements of $\bSigma$ are generated as $\sigma_{ii} \sim \chi^2_5 / 5$ for $i = 1, \ldots, p$, rather than being fixed at 1. For each configuration, the experiment is repeated 1{,}000 times, and the empirical Type~I error and power are computed for all tests. The results are summarized in Figure~\ref{fig:5}.

We observe the following patterns. When the covariance structure is IND, all methods asymptotically control the Type~I error at the nominal level, and SD exhibits the highest power across all non-null proportions. The power curve of BILT\_2 is slightly lower than that of SD, but the difference is negligible. For the other covariance structures, BILT\_2 maintains control of the Type~I error while achieving power that is comparable to, and often higher than, that of the competing methods. Its advantage becomes more pronounced when the dependence structure is relatively strong, such as in the AR\_0.6 and BD\_0.6 cases.

\begin{figure}[!htb]
    \centering
    \includegraphics[width=\linewidth]{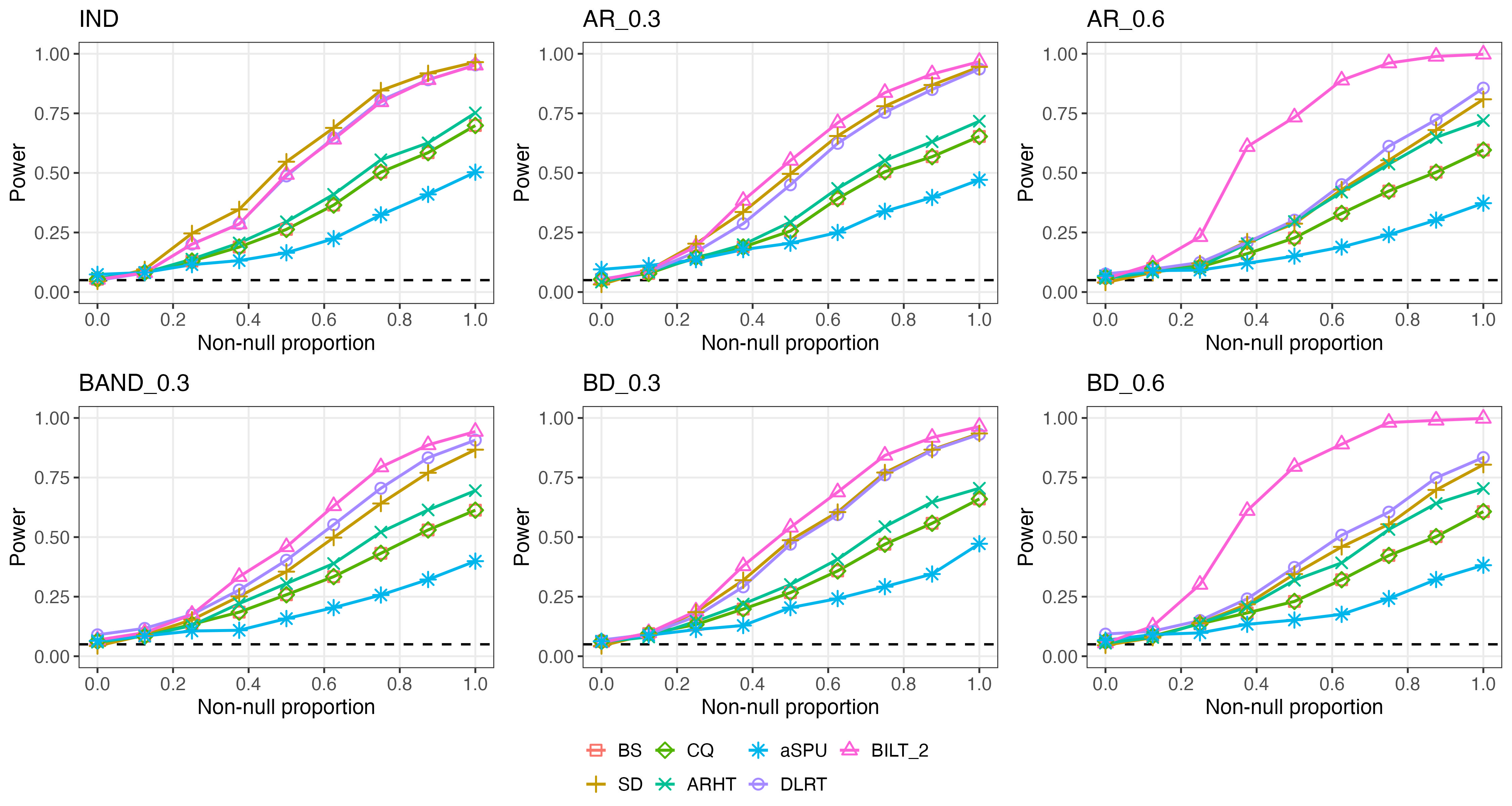}
    \caption{Power comparison of BS, SD, CQ, ARHT, aSPU, DLRT, and BILT\_2 with $n_1 = n_2 = 50$ and $p = 500$. Each panel corresponds to a covariance structure. Within each panel, results for the seven tests are displayed. The x-axis represents the non-null proportion, and the y-axis represents the power based on 1{,}000 replications. The horizontal dashed line indicates the nominal Type~I error level of $0.05$.}
    \label{fig:5}
\end{figure}

\section{Real Data Analysis} \label{sec_5}

In this section, we illustrate the application of BILT to testing the mean difference between two populations with matrix-valued observations using the Alzheimer's Disease Neuroimaging Initiative (ADNI) dataset, and compare its performance with existing methods.

\subsection{Data Description}
 The ADNI dataset includes MRI-based measurements of the corpus callosum (CC) for 135 subjects (95 males, 40 females) initially diagnosed with mild cognitive impairment (MCI). Subjects are divided into two groups: 74 converters (MCI-C), who progressed to Alzheimer’s disease (AD) within three years of their MCI diagnosis, and 61 non-converters (MCI-NC), who remained free of AD during the follow-up period. For further details on data collection and preprocessing, see \citet{lee2016predicting}.

Although the original ADNI dataset contains a variety of predictors and imaging metrics, we focus exclusively on the CC thickness profiles measured over a two-year period. The left panel of Figure~\ref{fig:6} displays a brain MRI image with the CC, while the right panel shows the CC partitioned into 99 equally spaced intervals, each yielding a single thickness measurement. This results in a vector of 99 thickness measurements per subject, and since measurements are collected over two time points, the data for each subject can be organized into a $2 \times 99$ matrix, where the two rows correspond to the two time points and the 99 columns correspond to spatial locations along the CC. It is important to emphasize that the two rows are temporally dependent, while the 99 columns exhibit spatial dependence. For analytical convenience, we discard the 99th measurement, reducing each subject's data to a $2 \times 98$ matrix. All subsequent analyses are based on these reduced matrices. 

\begin{figure}[!htb]
    \centering
    \includegraphics[width=0.9\linewidth]{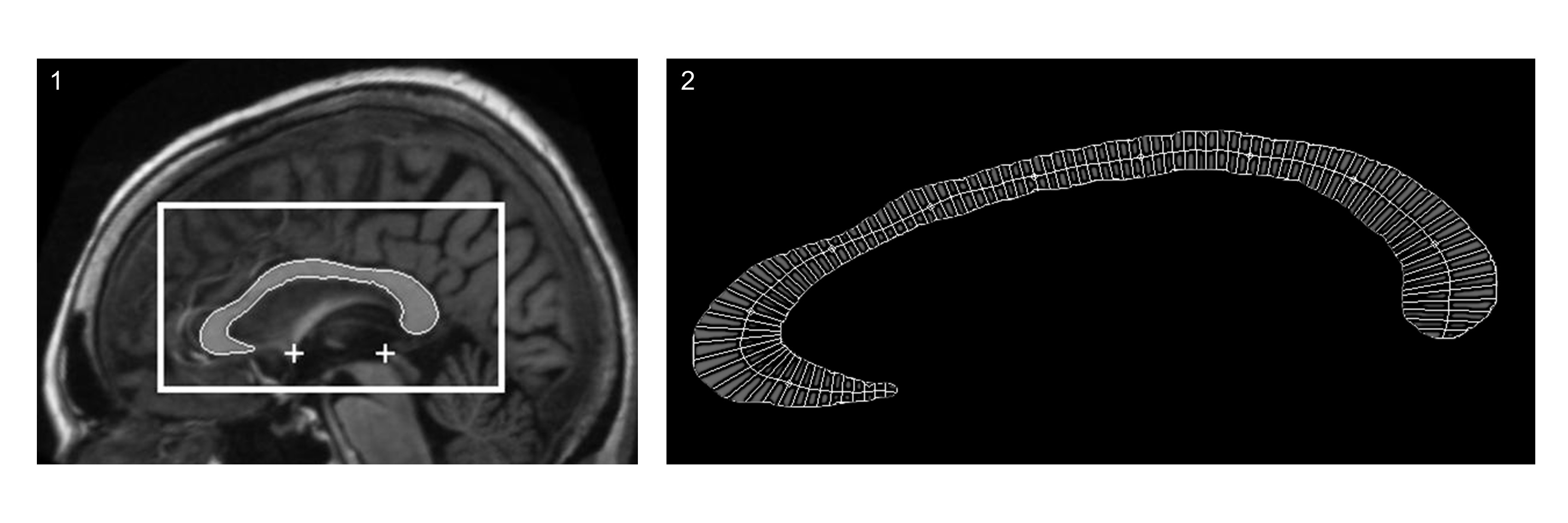}
    \caption{(1) A brain MRI image with the corpus callosum (CC) highlighted by its outline. (2) The measurement of CC thickness profile overlaid on a segmented cross-section of the CC.}
    \label{fig:6}
\end{figure}

\subsection{Experimental Setup}
With these matrix-valued observations, our goal is to investigate differences in CC thickness profiles across the following three groupings:
\begin{itemize}
    \item \textbf{Conversion}: MCI-C vs.\ MCI-NC
    \item \textbf{Gender}: Male vs.\ Female
    \item \textbf{Interaction}: (MCI-C, Male) $\cup$ (MCI-NC, Female) vs.\ (MCI-NC, Male) $\cup$ (MCI-C, Female)
\end{itemize}
For example, the \textbf{Conversion} grouping compares CC thickness profiles between subjects who converted to AD (MCI-C) and those who did not (MCI-NC). The third grouping is specifically designed to capture the interaction effect between AD conversion and gender.

For the analysis, each subject's $2 \times 98$ matrix is vectorized into a $196$-dimensional vector by stacking the two measurements at each spatial location consecutively. Based on these vectors, we apply BILT\_2, BILT\_4, DLRT, SD, and ARHT across the three groupings. Specifically, BILT\_2 is designed to account for temporal dependency, while BILT\_4 is designed to account for both temporal and spatial dependencies; Figure~\ref{fig:7} provides a visual illustration of the dependency structures captured by each method. In BILT\_2, each block consists of a $2 \times 1$ subvector containing the two temporal measurements at a single spatial location, thereby capturing temporal dependency. In BILT\_4, each block consists of a $2 \times 2$ submatrix grouping two consecutive spatial locations across both time points, thereby capturing both temporal and spatial dependencies simultaneously. DLRT serves as a baseline that ignores any dependency structure, and SD and ARHT are included as representative benchmarks, as they exhibited relatively strong power in the simulation studies of Section~\ref{sec_4}.

\begin{figure}
    \centering
    \includegraphics[width=\linewidth]{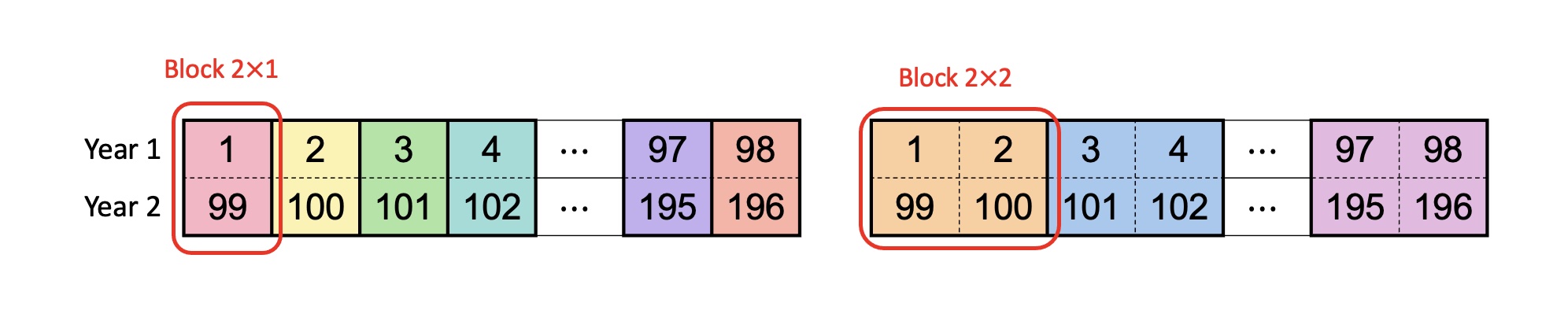}
    \caption{Illustration of the block structures used in BILT\_2 and BILT\_4. In BILT\_2 (left), each block has size $2 \times 1$, grouping the two temporal measurements at a single spatial location (same color). In BILT\_4 (right), each block has size $2 \times 2$, grouping two consecutive spatial locations across both time points (same color), thereby additionally capturing spatial dependency.}
    \label{fig:7}
\end{figure}

\subsection{Result}

Table \ref{tab:realdata} summarizes the $p$-values for each experiment. First, we observe that the DLRT and BILT methods demonstrate noticeably higher statistical power than SD and ARHT.
When comparing DLRT and BILT, both methods produced highly significant results in cases where the group difference is substantial, such as the MCI-C versus MCI-NC. 
However, regarding the interaction effect, both BILT\_2 and BILT\_4 demonstrated considerably higher significance than DLRT. This suggests that accommodating the temporal dependency through block structure enhances testing performance. Furthermore, for the gender comparison, only BILT\_4 yielded a significant result while the other methods showed weak significance. This implies two key points: first, the spatial and temporal dependencies inherent in the CC thickness profile data were successfully captured BILT\_4; and second, despite the numerical study results showing the best performance for BILT\_2, using a larger block size may have practical utility depending on the underlying data structure.
In particular, the spatial dependency structure of the CC thickness data is expected to resemble an autoregressive (AR) structure, where correlations decay over distance. Thus, it is intuitive that BILT naturally outperforms DLRT by capturing these local dependencies, rather than assuming working independence.

\begin{table}[H]
\centering
\caption{Comparison of p-values for each hypothesis among various testing methods. Each row represents a tested hypothesis setting, and each column indicates the testing method applied.}
\begin{tabular}{cccccc}
\hline
                 & BILT\_2  & BILT\_4  & DLRT     & SD    & ARHT  \\ \hline
Conversion & $<$0.001 & $<$0.001 & $<$0.001 & 0.212 & 0.108 \\ \hline
Gender  & 0.262    & 0.040    & 0.197    & 0.541 & 0.075 \\ \hline
Interaction      & 0.026    & 0.050    & 0.130    & 0.540 & 0.817 \\ \hline
\end{tabular}
\label{tab:realdata}
\end{table}



\section{Conclusion} \label{sec_6}
In this study, we proposed BILT as a generalization of DLRT for high-dimensional two-sample mean testing. By incorporating a flexible block-independence structure, BILT accommodates moderate within-block correlations commonly observed in high-dimensional data while retaining computational feasibility.
We established the asymptotic normality of the BILT statistic under mild regularity conditions. In addition, we analyzed the asymptotic power of BILT under local alternatives. Extensive simulation studies demonstrated that BILT achieves asymptotic Type~I error control and yields substantial power gains over DLRT, particularly under correlated structures that deviate from the independence covariance model. 
Furthermore, the application to the Alzheimer’s Disease Neuroimaging Initiative (ADNI) dataset highlighted the practical value of BILT. 
Overall, the proposed BILT framework offers a promising direction for high-dimensional inference where balancing dependence modeling and statistical efficiency is essential.

There are several directions for future research. These include the development of data-driven methods for selecting the block size $b$, extensions to settings with unequal covariance matrices, and adaptations to more general dependence structures, such as spatial or temporal correlations. In addition, it would be worthwhile to investigate the finite-sample performance of the variance estimator for $T_{\mathrm{BILT}}/\sqrt{K}$ under different choices of kernel functions $w(\cdot)$ and bandwidth parameters $L$. In particular, several natural questions arise in finite samples: which kernel function is optimal in maximizing power while maintaining asymptotic Type~I error control, and how should one select an optimal bandwidth parameter $L$ in a data-adaptive manner.

\section*{Acknowledgement}
The authors declare that there are no conflicts of interest.

\section*{Data Availability Statement}
Data used in preparation of this article were obtained from the Alzheimer's Disease Neuroimaging Initiative (ADNI) database (\url{adni.loni.usc.edu}). As such, the investigators within the ADNI contributed to the design and implementation of ADNI and/or provided data but did not participate in analysis or writing of this report. A complete listing of ADNI investigators can be found at: \url{http://adni.loni.usc.edu/wp-content/uploads/how_to_apply/ADNI_Acknowledgement_List}.

\bibliographystyle{chicago}
\bibliography{ref}

@article{hu2019diagonal,
  title={Diagonal likelihood ratio test for equality of mean vectors in high-dimensional data},
  author={Hu, Zongliang and Tong, Tiejun and Genton, Marc G},
  journal={Biometrics},
  volume={75},
  number={1},
  pages={256--267},
  year={2019}
}

@article{huang2022overview,
  title={An overview of tests on high-dimensional means},
  author={Huang, Yuan and Li, Changcheng and Li, Runze and Yang, Songshan},
  journal={Journal of Multivariate Analysis},
  volume={188},
  pages={104813},
  year={2022}
}

@article{bai1996effect,
  title={Effect of high dimension: by an example of a two sample problem},
  author={Bai, Zhidong and Saranadasa, Hewa},
  journal={Statistica Sinica},
  volume={6},
  number={2},
  pages={311--329},
  year={1996}
}

@article{chen2010two,
title={A two-sample test for high-dimensional data with applications to gene-set testing},
author={Song Xi Chen and Ying-Li Qin},
journal={The Annals of Statistics},
volume={38},
number={2},
pages={808--835},
year={2010}
}

@article{srivastava2008test,
  title={A test for the mean vector with fewer observations than the dimension},
  author={Srivastava, Muni S and Du, Meng},
  journal={Journal of Multivariate Analysis},
  volume={99},
  number={3},
  pages={386--402},
  year={2008}
}

@article{tony2014two,
  title={Two-sample test of high dimensional means under dependence},
  author={Cai, Tony and Liu, Weidong and Xia, Yin},
  journal={Journal of the Royal Statistical Society Series B: Statistical Methodology},
  volume={76},
  number={2},
  pages={349--372},
  year={2014}
}

@article{xu2016adaptive,
  title={An adaptive two-sample test for high-dimensional means},
  author={Xu, Gongjun and Lin, Lifeng and Wei, Peng and Pan, Wei},
  journal={Biometrika},
  volume={103},
  number={3},
  pages={609--624},
  year={2016}
}

@article{chen2011regularized,
  title={A regularized {Hotelling}’s {T}$^2$ test for pathway analysis in proteomic studies},
  author={Chen, Lin S and Paul, Debashis and Prentice, Ross L and Wang, Pei},
  journal={Journal of the American Statistical Association},
  volume={106},
  number={496},
  pages={1345--1360},
  year={2011}
}

@article{hotelling1931,
title = {The Generalization of {Student}'s Ratio},
author = {Harold Hotelling},
journal = {The Annals of Mathematical Statistics},
volume = {2},
number = {3},
pages = {360 -- 378},
year = {1931}
}

@article{lee2016predicting,
  title={Predicting progression from mild cognitive impairment to Alzheimer's disease using longitudinal callosal atrophy},
  author={Lee, Sang Han and Bachman, Alvin H and Yu, Donghyeon and Lim, Johan and Ardekani, Babak A and Alzheimer's Disease Neuroimaging Initiative and others},
  journal={Alzheimer's \& Dementia: Diagnosis, Assessment \& Disease Monitoring},
  volume={2},
  number = {1},
  pages={68--74},
  year={2016}
}

@article{andrews1991heteroskedasticity,
  title={Heteroskedasticity and autocorrelation consistent covariance matrix estimation},
  author={Andrews, Donald WK},
  journal={Econometrica},
  volume={59},
  number={3},
  pages={817--858},
  year={1991}
}

@article{wu2006multivariate,
  title={A multivariate two-sample mean test for small sample size and missing data},
  author={Wu, Yujun and Genton, Marc G and Stefanski, Leonard A},
  journal={Biometrics},
  volume={62},
  number={3},
  pages={877--885},
  year={2006}
}

@article{srivastava2009test,
  title={A test for the mean vector with fewer observations than the dimension under non-normality},
  author={Srivastava, Muni S},
  journal={Journal of Multivariate Analysis},
  volume={100},
  number={3},
  pages={518--532},
  year={2009}
}

@inproceedings{lopes2011more,
  title={A more powerful two-sample test in high dimensions using random projection},
  author={Lopes, Miles and Jacob, Laurent and Wainwright, Martin J},
  journal={Advances in Neural Information Processing Systems},
  volume={24},
  year={2011},
  booktitle = {Advances in Neural Information Processing Systems}
}

@article{liu2024projection,
  title={Projection test for mean vector in high dimensions},
  author={Liu, Wanjun and Yu, Xiufan and Zhong, Wei and Li, Runze},
  journal={Journal of the American Statistical Association},
  volume={119},
  number={545},
  pages={744--756},
  year={2024}
}

@phdthesis{huang2015projection,
  author={Huang, Yuan},
  title={Projection test for high-dimensional mean vectors with optimal direction},
  school={The Pennsylvania State University},
  year={2015} 
}

@article{srivastava2016raptt,
  title={{RAPTT}: An exact two-sample test in high dimensions using random projections},
  author={Srivastava, Radhendushka and Li, Ping and Ruppert, David},
  journal={Journal of Computational and Graphical Statistics},
  volume={25},
  number={3},
  pages={954--970},
  year={2016}
}

@book{hall2014martingale,
  title={Martingale Limit Theory and Its Application},
  author={Hall, Peter and Heyde, Christopher C},
  year={2014},
  publisher={Academic Press},
  address={New York}
}

@article{newey1986simple,
  title={A simple, positive semi-definite, heteroskedasticity and autocorrelation consistent covariance matrix},
  author={Newey, Whitney K and West, Kenneth D},
  journal={Econometrica},
  volume={55},
  number={3},
  pages={703--708},
  year={1986}
}

@article{jansson2002consistent,
  title={Consistent covariance matrix estimation for linear processes},
  author={Jansson, Michael},
  journal={Econometric Theory},
  volume={18},
  number={6},
  pages={1449--1459},
  year={2002}
}

@article{li2020adaptable,
  title={An adaptable generalization of {Hotelling}'s {T}$^2$ test in high dimension},
  author={Li, Haoran and Aue, Alexander and Paul, Debashis and Peng, Jie and Wang, Pei},
  journal={The Annals of Statistics},
  volume={48},
  number={3},
  pages={1815--1847},
  year={2020}
}

@article{varin2005note,
  title={A note on composite likelihood inference and model selection},
  author={Varin, Cristiano and Vidoni, Paolo},
  journal={Biometrika},
  volume={92},
  number={3},
  pages={519--528},
  year={2005}
}

@article{huang2020composite,
  title={Composite likelihood inference under boundary conditions},
  author={Huang, Jing and Ning, Yang and Cai, Yi and Liang, Kung-Yee and Chen, Yong},
  journal={Statistica Sinica},
  volume={30},
  number={2},
  pages={1005--1025},
  year={2020}
}

\appendix
\numberwithin{equation}{section}
\numberwithin{table}{section}
\numberwithin{figure}{section}

\numberwithin{theorem}{section}
\numberwithin{lemma}{section}
\numberwithin{proposition}{section}
\numberwithin{corollary}{section}

\section{Proof of Lemma~\ref{lemma1}} \label{Appendix_A}
Lemma~\ref{lemmaA1} provides a useful results that facilitates the proof of Lemma~\ref{lemma1}.
\begin{lemma}\label{lemmaA1}
    For any $a, b > 0$, the following integral equalities hold:
    \begin{align*}
        \int_0^1 \log (z) \, z^{\frac{a}{2}-1} (1-z)^{\frac{b}{2}-1} \, dz
        &= B\!\left(\frac{a}{2}, \frac{b}{2}\right)
        \left\{
            \Psi\!\left(\frac{a}{2}\right)
            - \Psi\!\left(\frac{a+b}{2}\right)
        \right\}, \\[0.5em]
        \int_0^1 (\log z)^2 \, z^{\frac{a}{2}-1} (1-z)^{\frac{b}{2}-1} \, dz
        &= B\!\left(\frac{a}{2}, \frac{b}{2}\right)
        \Bigg[
            \left\{
                \Psi\!\left(\frac{a}{2}\right)
                - \Psi\!\left(\frac{a+b}{2}\right)
            \right\}^2
            + \left\{
                \Psi'\!\left(\frac{a}{2}\right)
                - \Psi'\!\left(\frac{a+b}{2}\right)
            \right\}
        \Bigg],
    \end{align*}
    where $B(a,b)$ denotes the beta function.
\end{lemma}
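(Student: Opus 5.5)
The plan is to differentiate the Beta integral under the integral sign with respect to its first parameter. For $\alpha,\beta>0$ set
\[
B(\alpha,\beta)=\int_0^1 z^{\alpha-1}(1-z)^{\beta-1}\,dz=\frac{\Gamma(\alpha)\Gamma(\beta)}{\Gamma(\alpha+\beta)}.
\]
Since $\partial_\alpha z^{\alpha-1}=\log(z)\,z^{\alpha-1}$, we have formally
\[
\partial_\alpha B(\alpha,\beta)=\int_0^1 \log(z)\,z^{\alpha-1}(1-z)^{\beta-1}\,dz,
\qquad
\partial_\alpha^2 B(\alpha,\beta)=\int_0^1 (\log z)^2\,z^{\alpha-1}(1-z)^{\beta-1}\,dz.
\]
Substituting $\alpha=a/2$ and $\beta=b/2$ then reduces both claims to computing these two derivatives of the Gamma-function expression.

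For the first identity, write $\log B=\log\Gamma(\alpha)+\log\Gamma(\beta)-\log\Gamma(\alpha+\beta)$. Differentiating in $\alpha$ gives
\[
\partial_\alpha B = B\,\{\Psi(\alpha)-\Psi(\alpha+\beta)\}.
\]
Differentiating once more with the product rule gives
\[
\partial_\alpha^2 B = B\{\Psi(\alpha)-\Psi(\alpha+\beta)\}^2 + B\{\Psi'(\alpha)-\Psi'(\alpha+\beta)\}.
\]
At $\alpha=a/2$, $\beta=b/2$, and using $\alpha+\beta=(a+b)/2$, these are exactly the two displayed formulas.

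The only step needing care is justifying differentiation under the integral sign. I would use dominated convergence on a compact neighbourhood $[\alpha_0-\epsilon,\alpha_0+\epsilon]$ with $\alpha_0-\epsilon>0$. On this set, for $z\in(0,1)$, we have $|\log z|^j z^{\alpha-1}\le |\log z|^j z^{\alpha_0-\epsilon-1}$ for $j=1,2$, because $z^{\alpha}$ is decreasing in $\alpha$ when $z<1$. The function $|\log z|^j z^{\alpha_0-\epsilon-1}(1-z)^{\beta-1}$ is integrable on $(0,1)$. Near $z=0$, the factor $z^{c-1}$ with $c>0$ absorbs any power of $|\log z|$. Near $z=1$, $|\log z|\sim 1-z$, which only improves integrability of $(1-z)^{\beta-1}$. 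This supplies the integrable dominating function for both derivatives, and analyticity of $\Gamma$ on $(0,\infty)$ handles the right-hand side. This justification is the main (mild) obstacle; the remainder is direct computation.
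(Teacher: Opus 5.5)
Your proposal is correct and follows the same route as the paper: differentiate the Beta integral representation under the integral sign (you in $\alpha=a/2$, the paper in $a$, which only changes a factor $\tfrac12$) and match the result with the derivatives of $\Gamma(\alpha)\Gamma(\beta)/\Gamma(\alpha+\beta)$ expressed through $\Psi$ and $\Psi'$. Your dominated-convergence justification of the interchange and your explicit product-rule computation of the second derivative fill in steps the paper leaves implicit or only sketches.
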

\begin{proof}
   By definition of the beta function, $B(a/2,b/2)$ admits the integral representation
    \begin{equation}\label{eqn:beta_func}
        B\!\left(\frac{a}{2}, \frac{b}{2}\right)
        = \int_0^1 z^{\frac{a}{2}-1} (1-z)^{\frac{b}{2}-1} \, dz.
    \end{equation}
    We first differentiate the left-hand side of \eqref{eqn:beta_func} with respect to $a$. Using the identity
    $B(a,b)=\Gamma(a)\Gamma(b)/\Gamma(a+b)$ together with $\Psi(x)=\Gamma'(x)/\Gamma(x)$ yields
    \begin{equation}\label{eqn:diff_LHS}
        \frac{\partial}{\partial a}
        B\!\left(\frac{a}{2}, \frac{b}{2}\right)
        = \frac{1}{2}
        B\!\left(\frac{a}{2}, \frac{b}{2}\right)
        \left\{
            \Psi\!\left(\frac{a}{2}\right)
            - \Psi\!\left(\frac{a+b}{2}\right)
        \right\}.
    \end{equation}
    On the other hand, differentiating the right-hand side of \eqref{eqn:beta_func} with respect to $a$ yields
    \begin{equation}\label{eqn:diff_RHS}
        \frac{\partial}{\partial a}
        \int_0^1 z^{\frac{a}{2}-1} (1-z)^{\frac{b}{2}-1} \, dz
        = \frac{1}{2} \int_0^1 \log(z)\, z^{\frac{a}{2}-1} (1-z)^{\frac{b}{2}-1} \, dz .
    \end{equation}
    Equating \eqref{eqn:diff_RHS} and \eqref{eqn:diff_LHS} establishes the first equality.
    
    For the second equality, we differentiate both sides of \eqref{eqn:beta_func} twice with respect to $a$. After straightforward calculations using $\Psi'(x)$ (the trigamma function) and the differentiation under the integral argument, the desired identity follows.
\end{proof}

\medskip
\begin{proof}[Proof of Lemma~\ref{lemma1}]
    Note that
    \begin{equation}\label{eqn:U_Nk}
        U_{N,k} \;\overset{d}{=}\; N \log\!\left(1 + \frac{p_k}{N-p_k-1} F_k\right),
        \qquad
        F_k \sim F(p_k,\, N-p_k-1).
    \end{equation}
    Thus, it suffices to evaluate $\Expect\{\log(1+\tfrac{a}{b}F)\}$ for $F \sim F(a,b)$. We have
    \begin{align*}
        \Expect\!\left[\log\!\left(1+\frac{a}{b}F\right)\right]
        &= \int_{0}^{\infty} \log\!\left(1+\frac{a}{b}x\right)
            \frac{(ax)^{a/2-1} b^{\,b/2}}{(ax+b)^{(a+b)/2}}
            \frac{a}{B(a/2,b/2)} \, dx \\[0.5em]
        &= B(a/2,b/2)^{-1}
            \int_{0}^{\infty} \log\!\left(\frac{ax+b}{b}\right)
            \left(\frac{ax}{ax+b}\right)^{a/2-1}
            \left(\frac{b}{ax+b}\right)^{b/2-1}
            \frac{ab}{(ax+b)^2} \, dx \\[0.5em]
        &\overset{(\star)}{=} -\, B(a/2,b/2)^{-1}
        \int_{0}^{1} \log y \, (1-y)^{a/2-1} y^{\,b/2-1} \, dy \\[0.5em]
        &\overset{(\star\star)}{=} \Psi\!\left(\frac{a+b}{2}\right) - \Psi\!\left(\frac{b}{2}\right).
    \end{align*}
    The equality $(\star)$ follows from the change of variables $y = \frac{b}{ax+b}$, and the equality $(\star\star)$ follows from Lemma~\ref{lemmaA1}.
    Consequently, by substituting $a=p_k$ and $b=N-p_k-1$, we obtain
    \begin{equation*}
        \Expect[U_{N,k}]
        = N \left\{
        \Psi\!\left(\frac{N-1}{2}\right)
        - \Psi\!\left(\frac{N-p_k-1}{2}\right)
        \right\}
        = N\, D_{p_k}(N-2).    
    \end{equation*}
    Similarly, we obtain
    \begin{align*}
        \mathbb{E}\!\left[\left\{\log\!\left(1 + \frac{a}{b}F\right)\right\}^2\right]
        &= B(a/2,b/2)^{-1}
        \int_{0}^{1} (\log y)^2 \, (1-y)^{a/2-1} y^{\,b/2-1} \, dy \\[0.5em]
        &=
        \Bigg[
            \left\{
                \Psi\!\left(\frac{b}{2}\right)
                - \Psi\!\left(\frac{a+b}{2}\right)
            \right\}^2
            + \left\{
                \Psi'\!\left(\frac{b}{2}\right)
                - \Psi'\!\left(\frac{a+b}{2}\right)
            \right\}
        \Bigg],
    \end{align*}
    where the last equality follows from Lemma~\ref{lemmaA1}.  
    Substituting $a=p_k$ and $b=N-p_k-1$ yields
    \begin{align*}
        \mathbb{E}[U_{N,k}^2]
        &= N^2 \Bigg[
            \left\{
                \Psi\!\left(\frac{N-1}{2}\right)
                - \Psi\!\left(\frac{N-p_k-1}{2}\right)
            \right\}^2
            - \left\{
                \Psi'\!\left(\frac{N-1}{2}\right)
                - \Psi'\!\left(\frac{N-p_k-1}{2}\right)
            \right\}
        \Bigg] \\
        &= N^2 \left\{ D_{p_k}^2(N-2) - 2 D_{p_k}'(N-2) \right\}.
    \end{align*}
    Consequently,
    \[
    \mathrm{Var}(U_{N,k})
    = \mathbb{E}[U_{N,k}^2] - \{\mathbb{E}[U_{N,k}]\}^2
    = -\,2 N^2 D_{p_k}'(N-2).
    \]

    We now study the asymptotic mean and variance of $U_{N,k}$.
    Since the identity in \eqref{eqn:U_Nk} hold, by the density of the $F$ distribution, the expectation of $U_{N,k}$ can be written as
    \begin{align*}
    \mathbb{E}[U_{N,k}]
    &= N \frac{p_k\,\Gamma\!\left(\frac{N-1}{2}\right)}
            {\Gamma\!\left(\frac{p_k}{2}\right)\Gamma\!\left(\frac{N-p_k-1}{2}\right)}
       \int_0^{\infty}
       \log\!\left(1+\frac{p_k}{N-p_k-1}x\right)
       \frac{(p_k x)^{\frac{p_k}{2}-1}(N-p_k-1)^{\frac{N-p_k-1}{2}}}
            {(p_k x+N-p_k-1)^{\frac{N-1}{2}}}
       \, dx \\[0.5em]
    &= \frac{p_k}{\Gamma\!\left(\frac{p_k}{2}\right)}
       \underbrace{\left[
       \frac{\Gamma\!\left(\frac{N-1}{2}\right)}
            {\Gamma\!\left(\frac{N-p_k-1}{2}\right)}
       (N-p_k-1)^{-\frac{p_k}{2}}
       \right]}_{\eqqcolon \textup{(I)}}\\
       &\qquad \times \underbrace{\int_0^{\infty}
       N \log\!\left(1+\frac{p_k}{N-p_k-1}x\right)
       \left(1+\frac{p_k x}{N-p_k-1}\right)^{-\frac{N-1}{2}}
       (p_k x)^{\frac{p_k}{2}-1}\, dx}_{\eqqcolon \textup{(II)}} .
    \end{align*}
    We analyze each term as $N \to \infty$.
    For term~\textup{(I)}, applying Stirling’s approximation, $\Gamma(x)=\sqrt{2\pi}\,x^{x-\frac12}e^{-x}\{1+O(x^{-1})\}$, yields
    \begin{equation}\label{eqn:I}
        \lim_{N\to\infty} \textup{(I)}
        = 2^{-\frac{p_k}{2}}.    
    \end{equation}
    For term~\textup{(II)}, an application of the dominated convergence theorem yields
    \begin{equation}\label{eqn:II}
        \lim_{N\to\infty} \textup{(II)} = \int_0^{\infty} (p_k x)^{\frac{p_k}{2}} e^{-\frac{p_k x}{2}} \, dx 
    \end{equation}
    Combining \eqref{eqn:I} and \eqref{eqn:II} yields
    \begin{align*}
    \lim_{N\to\infty} \mathbb{E}[U_{N,k}]
    &= \frac{p_k}{\Gamma\!\left(\frac{p_k}{2}\right)} 2^{-\frac{p_k}{2}}
       \int_0^{\infty} (p_k x)^{\frac{p_k}{2}} e^{-\frac{p_k x}{2}} \, dx \\[0.5em]
    &\overset{(\star)}{=} \frac{2}{\Gamma\!\left(\frac{p_k}{2}\right)}
       \int_0^{\infty} y^{\frac{p_k}{2}} e^{-y} \, dy
     = \frac{2\,\Gamma\!\left(\frac{p_k}{2}+1\right)}
            {\Gamma\!\left(\frac{p_k}{2}\right)}
     = p_k.
    \end{align*}
    The equality $(\star)$ follows from the change of variables $y = \frac{p_kx}{2}$. Similarly, using the same limiting arguments, we obtain
    \begin{align*}
    \lim_{N\to\infty} \mathbb{E}[V_{N,k}^2]
    &= \frac{p_k}{\Gamma\!\left(\frac{p_k}{2}\right)} 2^{-\frac{p_k}{2}}
       \int_0^{\infty} (p_k x)^{\frac{p_k}{2}+1} e^{-\frac{p_k x}{2}} \, dx \\
    &= \frac{p_k}{\Gamma\!\left(\frac{p_k}{2}\right)} 2^{-\frac{p_k}{2}}
       \int_0^{\infty} (2y)^{\frac{p_k}{2}+1} e^{-y} \frac{2}{p_k}\, dy \\
    &= \frac{4}{\Gamma\!\left(\frac{p_k}{2}\right)}
       \Gamma\!\left(\frac{p_k}{2}+2\right)
     = p_k(p_k+2).
    \end{align*}
    Therefore,
    \begin{equation*}
        \mathrm{Var}(U_{N,k})
        = \lim_{N\to\infty}
        \left\{\mathbb{E}[V_{N,k}^2] - (\mathbb{E}[V_{N,k}])^2\right\}
        = 2p_k
        \quad \text{as}\quad N\to\infty.    
    \end{equation*}
\end{proof}

\section{Proof of Theorem \ref{thm3}} \label{Appendix_B}
\begin{proof}[Proof of Theorem~\ref{thm3}]
    The proof relies on a central limit theorem for strongly mixing sequences. Since $\{U_{N,k}\}$ is stationary and conditions (C1$^\prime$) and (C2$^\prime$) hold, Corollary~5.1 of \citet{hall2014martingale} implies that it suffices to verify that, for some $\delta>0$,
    \begin{equation}
        \mathbb{E}\!\left[\,\bigl|U_{N,k}-\mathbb{E}[U_{N,k}]\bigr|^{2+\delta}\right] < \infty.
    \end{equation}
    Using the identity in \eqref{eqn:U_Nk}, we obtain
    \begin{align*}
        &\mathbb{E}\!\left[\,\bigl|U_{N,k}-\mathbb{E}[U_{N,k}]\bigr|^{2+\delta}\right]\\
       &= \int_0^\infty 
        \Bigl|\,N\log\!\Bigl(1+\frac{p_k x}{N-p_k-1}\Bigr) - N D_{p_k}(N-2)\Bigr|^{2+\delta}\,
        \frac{p_k(p_k x)^{\frac{p_k}{2}-1}(N-p_k-1)^{\frac{N-p_k-1}{2}}}
            {B(p_k/2, (N-p_k-1)/2)(p_k x+N-p_k-1)^{\frac{N-1}{2}}}
       \, dx \\[0.5em]
        &= B^{-1}\!\left(\frac{p_k}{2}, \frac{N-p_k-1}{2}\right)
        \int_0^\infty 
        \Bigl|\,N\log\!\Bigl(1+\frac{p_k x}{N-p_k-1}\Bigr) - N D_{p_k}(N-2)\Bigr|^{2+\delta} \\
        &\qquad\qquad \times
        \left(\frac{p_k x}{p_k x + N - p_k -1}\right)^{\frac{p_k}{2}-1}
        \left(\frac{N-p_k-1}{p_k x + N - p_k -1}\right)^{\frac{N-p_k-1}{2}-1}
        \frac{p_k (N-p_k-1)}{(p_k x + N - p_k -1)^2}\, dx \\[0.5em]
        &= B^{-1}\!\left(\frac{p_k}{2}, \frac{N-p_k-1}{2}\right)
        \underbrace{\int_0^1 
        \Bigl|\,N\log\!\Bigl(\frac{1}{y}\Bigr) - N D_{p_k}(N-2)\Bigr|^{2+\delta}
        (1-y)^{\frac{p_k}{2}-1} y^{\frac{N-p_k-1}{2}-1}\, dy}_{\eqqcolon \textup{(I)}}. 
    \end{align*}
    The last equality follows by the change of variables $y = \frac{N-p_k-1}{p_kx + N-p_k -1}$.
    By the triangle inequality and the convexity of $x^{2+\delta}$, we obtain
    \begin{align*}
        \textup{(I)}
        &\le 2^{1+\delta}
        \Bigg[
        \underbrace{\int_0^1 |N\log y|^{2+\delta}
        (1-y)^{\frac{p_k}{2}-1} y^{\frac{N-p_k-1}{2}-1}\,dy}_{\eqqcolon\,\textup{(II)}}
        +
        \int_0^1 |N D_{p_k}(N-2)|^{2+\delta}
        (1-y)^{\frac{p_k}{2}-1} y^{\frac{N-p_k-1}{2}-1}\,dy
        \Bigg].
    \end{align*}
    The second integral on the left-hand side of the above inequality is finite since the integrand is a constant multiple of a Beta density, provided that $N \geq \underset{k}{\max} \, p_k + 2$. Hence, it suffices to show that $\textup{(II)}<\infty$.
    We split the integration domain of $\textup{(II)}$ into two parts:
    \begin{align*}
        \textup{(II)}
        =
        \underbrace{
        \int_0^{1/2} |N\log y|^{2+\delta}
        (1-y)^{\frac{p_k}{2}-1} y^{\frac{N-p_k-1}{2}-1}\,dy
        }_{\eqqcolon\,\textup{(III)}}
        + \int_{1/2}^1 |N\log y|^{2+\delta}
        (1-y)^{\frac{p_k}{2}-1} y^{\frac{N-p_k-1}{2}-1}\,dy.
    \end{align*}
    Since $|\log y|\le \log 2$ for $y\in[1/2,1]$, the second integral is clearly finite, provided that $N \geq \underset{k}{\max} \, p_k + 2$. For $\textup{(III)}$, we note that
    \begin{align*}
        \textup{(III)}
        \le
        \sqrt{2} \int_0^{1/2} \{N\log(1/y)\}^{2+\delta}
        \, y^{\frac{N-p_k-1}{2}-1}\,dy
        \le
        \sqrt{2}\int_0^1 \{N\log(1/y)\}^{2+\delta}
        \, y^{\frac{N-p_k-1}{2}-1}\,dy.
    \end{align*}
    Applying the change of variables $t=\log(1/y)$ yields
    \begin{equation*}
        \int_0^1 \{\log(1/y)\}^{2+\delta}
        \, y^{\frac{N-p_k-1}{2}-1}\,dy
        =
        \int_0^\infty t^{2+\delta}
        e^{-t\left(\frac{N-p_k-1}{2}\right)}\,dt
        <\infty,
    \end{equation*}
    provided that $N \geq \underset{k}{\max} \, p_k + 2$. This completes the proof.
\end{proof}

\section{Proof of Theorem \ref{thm4}} \label{Appendix_C}
\begin{proof}[Proof of Theorem~\ref{thm4}]
Note that, under the local alternative \eqref{eqn:local_alter_BILT},
\begin{equation*}
    U_{N,k} \overset{d}{=} 
    N \log \left(1 + \frac{b}{N-b-1} F_k\right),
\end{equation*}
where $F_k$ follows a noncentral $F$ distribution with degrees of freedom $b$ and $N-b-1$, and noncentrality parameter
\begin{equation*}
    \lambda_k = \bdelta_k^\top \bSigma_{kk}^{-1} \bdelta_k.
\end{equation*}
Also note that
\begin{equation*}
    N\left( \frac{b}{N-b-1}F_k - \frac{b^2}{2(N-b-1)^2}F_k^2\right)
    \leq N\log\left(1+ \frac{b}{N-b-1}F_k\right)
    \leq N \frac{b}{N-b-1}F_k.
\end{equation*}
Therefore, for any $N>b+5$, it follows that
\begin{equation*}
    N \left(\frac{b\, \mathbb{E}\!\left[F_k \mid H_1\right]}{N-b-1}
    - \frac{b^2\,\mathbb{E}\!\left[F_k^2 \mid H_1\right]}{2(N-b-1)^2}\right)
    \leq
    N\,\mathbb{E}\!\left[\log\left(1 + \frac{b}{N-b-1}F_k\right)\middle | H_1\right]
    \leq
    N \frac{b\,\mathbb{E}\!\left[F_k \mid H_1\right]}{N-b-1}.
\end{equation*}
Under condition \eqref{eqn:unif_bound_BILT}, we have
\begin{equation*}
    N\,\mathbb{E}\!\left[\log\left(1+\frac{b}{N-b-1}F_k\right)\middle | H_1\right]
    = b\,\mathbb{E}\!\left[F_k \mid H_1\right] + O\!\left(\frac{1}{N}\right).
\end{equation*}
Furthermore, since
\begin{equation*}
    \mathbb{E}\!\left[F_k \mid H_1\right]
    = \frac{(N-b-1)\bigl(b + \bdelta_k^\top \bSigma_{kk}^{-1} \bdelta_k\bigr)}{b\,(N-b-3)},
\end{equation*}
we obtain
\begin{equation*}
    G_k \coloneqq N\,\mathbb{E}\!\left[\log\left(1+\frac{b}{N-b-1}F_k\right)\middle | H_1\right]
    = b + \bdelta_k^\top \bSigma_{kk}^{-1} \bdelta_k + O\!\left(\frac{1}{N}\right).
\end{equation*}
Let $\bar{G} = K^{-1} \sum_{k = 1}^K G_k= b + K^{-1}\bDelta^\top \bDelta + O(1/N)$, and let $\hat{\tau}_{\mathrm{BILT}}^2$ be a consistent estimator of $\tau_{\mathrm{BILT}}^2$. Under the condition (C1$^\prime$) and (C2$^\prime$), we have
\begin{align*}
    \frac{T_{\mathrm{BILT}} - Kb}{\sqrt{K \hat{\tau}_{\mathrm{BILT}}^2}} 
    &= 
    \frac{T_{\mathrm{BILT} }-K\bar{G}}{\sqrt{K \hat{\tau}_{\mathrm{BILT}}^2}} + \frac{K(\bar{G} - b)}{\sqrt{K\hat{\tau}_{\mathrm{BILT}}^2}}\\
    &=
    \frac{T_{\mathrm{BILT}} - K\bar{G}}{\sqrt{K\hat{\tau}_{\mathrm{BILT}}^2}} + \frac{\sqrt{K}(K^{-1} \bDelta^\top \bDelta)}{\sqrt{\hat{\tau}_{\mathrm{BILT}}^2}} + \frac{\sqrt{K}O(1/N)}{\sqrt{\hat{\tau}_{\mathrm{BILT}}^2}}.
\end{align*}
Since we suppose that $K = o(N^2)$, we obtain 
\begin{equation*}
    \frac{T_{\mathrm{BILT}}- Kb}{\hat{\tau}_{\mathrm{BILT}}\sqrt{K}} \overset{d}{\to} N(0,1) +  \frac{\bDelta^\top \bDelta/\sqrt{K}}{\tau_{\mathrm{BILT}}} \quad \text{as} \quad (N, p) \to \infty.
\end{equation*}
This completes the proof.
\end{proof}

\section{Additional Simulation Results} \label{Appendix_D}
\subsection{Asymptotic Normality of the Standardized BILT Statistic}
We investigate the asymptotic normality of the standardized BILT statistic $\tilde{Z}_{\mathrm{BILT}}$ under the null hypothesis. All simulation settings are the same as in Section~\ref{sub_sec_4.1}, except that the underlying covariance structure is set to BD\_0.6. The results are summarized in Figure~\ref{fig:D1}.

We observe that when the block size is $b = 1$, the empirical distribution of $\tilde{Z}_{\mathrm{BILT}}$ deviates noticeably from the standard normal distribution, particularly in the tails. In contrast, the histograms corresponding to BILT with block size $b > 1$ more closely align with the standard normal density.

\begin{figure}[!htb]
    \centering
    \includegraphics[width=\linewidth]{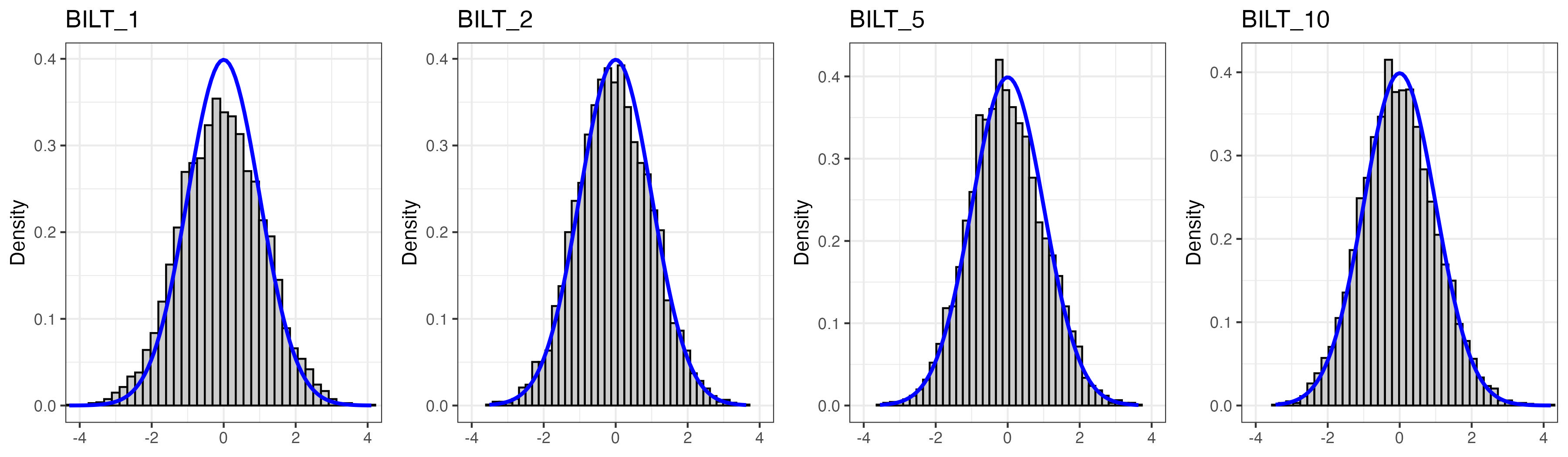}
    \caption{Histogram of the standardized BILT statistics based on 5{,}000 replications, overlaid with the standard normal density, under the BD\_0.6 covariance structure. From left to right, the panels present the results for BILT with block sizes $b = 1, 2, 5,$ and $10$, respectively.}
    \label{fig:D1}
\end{figure}

\subsection{Type~I Error Control}
Next, we examine the asymptotic validity of BILT. The simulation settings are identical to those in Section~\ref{sub_sec_4.2}, except that the sample sizes are set to $n_1 = n_2 = 100$. The results are summarized in Figure~\ref{fig:D2}.

Figure~\ref{fig:D2} exhibits patterns similar to those observed in Figure~\ref{fig:2}. In particular, when the block size is $b = 1$, the type~I error is controlled at the nominal level under relatively weak dependence structures. However, under relatively strong dependence, it fails to maintain type~I error control across all dimensions $p$. In contrast, for $b > 1$, the type~I error tends to approach the nominal level as $p$ increases. Consistent with the previous findings, BILT with $b = 2$ provides stable and reliable type~I error control across all simulation settings.

\begin{figure}[!htb]
    \centering
    \includegraphics[width=\linewidth]{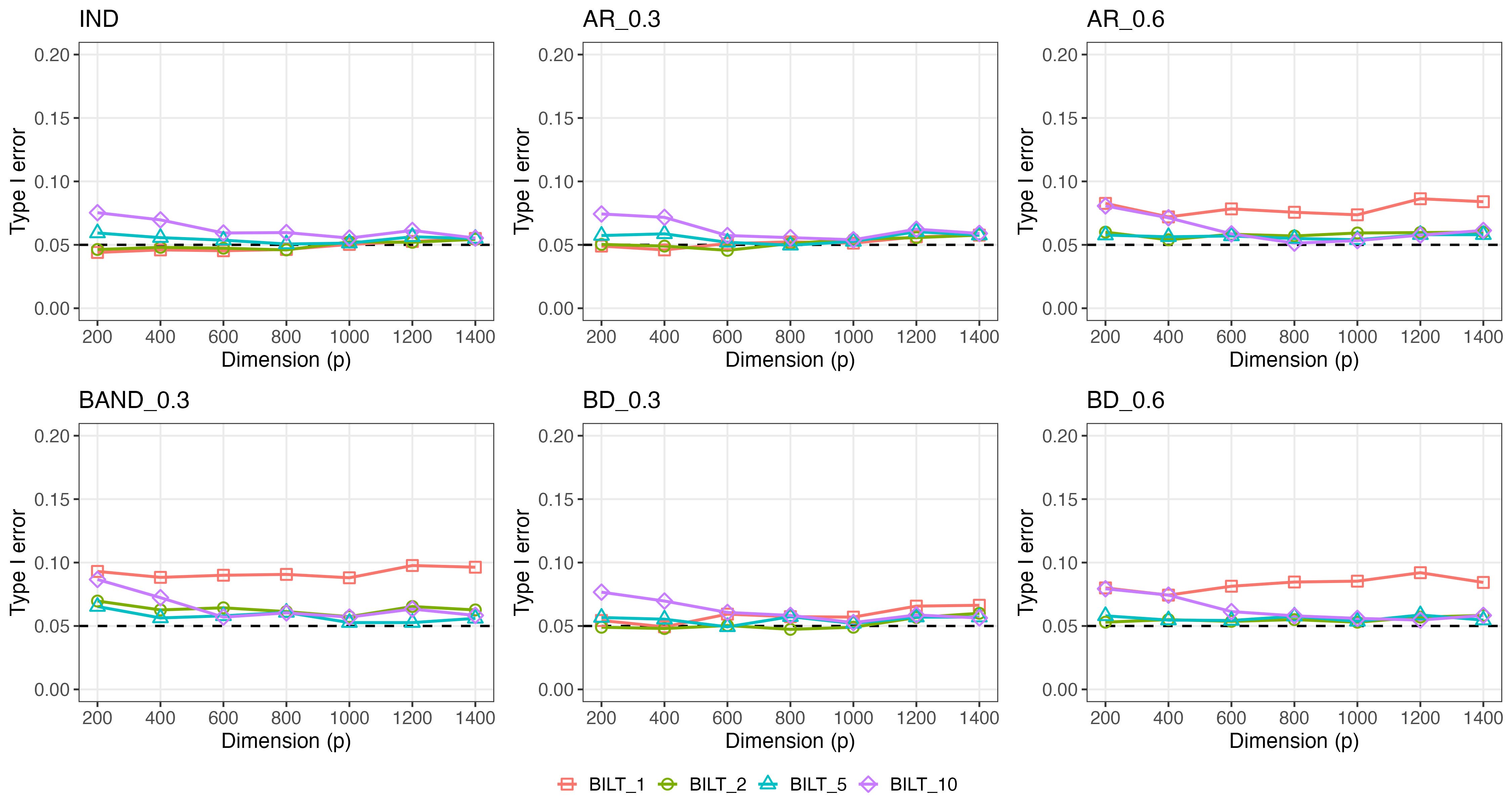}
    \caption{Type~I error of BILT with $n_1 = n_2 = 100$. Each panel corresponds to a covariance structure. Within each panel, results for block sizes $b \in \{1, 2, 5, 10\}$ are displayed. The x-axis represents the dimension $p$, and the y-axis represents the Type~I error obtained from 3{,}000 replications. The horizontal dashed line indicates the nominal Type~I error level of $0.05$.}
    \label{fig:D2}
\end{figure}

\subsection{Power Curve Against Signal Magnitude}
Now, we examine the asymptotic power of BILT as the signal magnitude varies. The simulation settings are identical to those in Section~\ref{sub_sec_4.3}, except that the sample sizes are increased to $n_1 = n_2 = 100$. The results are summarized in Figure~\ref{fig:D3}.

Figure~\ref{fig:D3} exhibits patterns consistent with the earlier findings. Power increases monotonically with the signal magnitude. Under IND, all block sizes yield nearly identical power, indicating no loss from enlarging $b$. In contrast, under dependent covariance structures, increasing the block size improves power, particularly when dependence is strong (BAND\_0.3, AR\_0.6, BD\_0.6). The improvement remains evident even for $b = 2$, further supporting its use as a practical default. Compared with Figure~\ref{fig:3}, where $n_1 = n_2 = 50$, increasing the sample size leads to uniformly higher power for a given signal magnitude.

\begin{figure}[!htb]
    \centering
    \includegraphics[width=\linewidth]{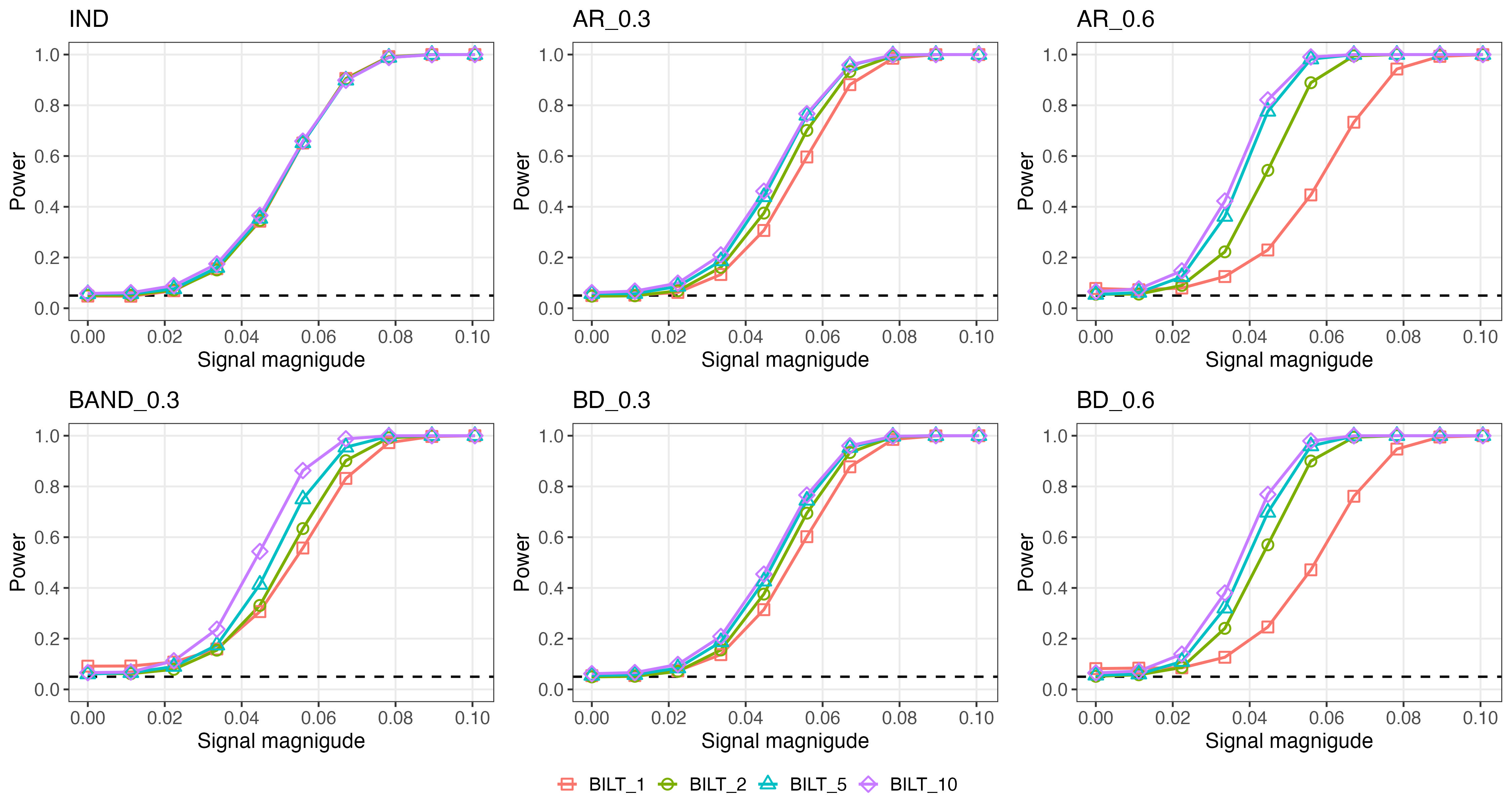}
    \caption{Power of BILT with $n_1 = n_2 = 100$ and $p = 500$. Each panel corresponds to a covariance structure. Within each panel, results for block sizes $b \in \{1, 2, 5, 10\}$ are displayed. The x-axis represents the signal magnitude $\delta/\sqrt{p}$, and the y-axis represents the power obtained from 3{,}000 replications. The horizontal dashed line indicates the nominal Type~I error level of $0.05$.}
    \label{fig:D3}
\end{figure}

\subsection{Power Curve Against Non-Null Proportion}
Finally, we examine the asymptotic power of BILT under varying non-null proportions. The simulation settings are identical to those in Section~\ref{sub_sec_4.4}, except that the sample sizes are increased to $n_1 = n_2 = 100$. The results are presented in Figure~\ref{fig:D4}.

We can observe the patterns consistent with the Figure~\ref{fig:4}. Overall, power increases as the non-null proportion increases. Under IND, all block sizes yield nearly identical power. In contrast, under dependent covariance structures, power improves with larger block sizes, particularly when dependence is strong. The improvement remains evident even for $b = 2$. Compared with Figure~\ref{fig:4}, where $n_1 = n_2 = 50$, increasing the sample size leads to uniformly higher power for a given non-null proportion.

\begin{figure}[!htb]
    \centering
    \includegraphics[width=\linewidth]{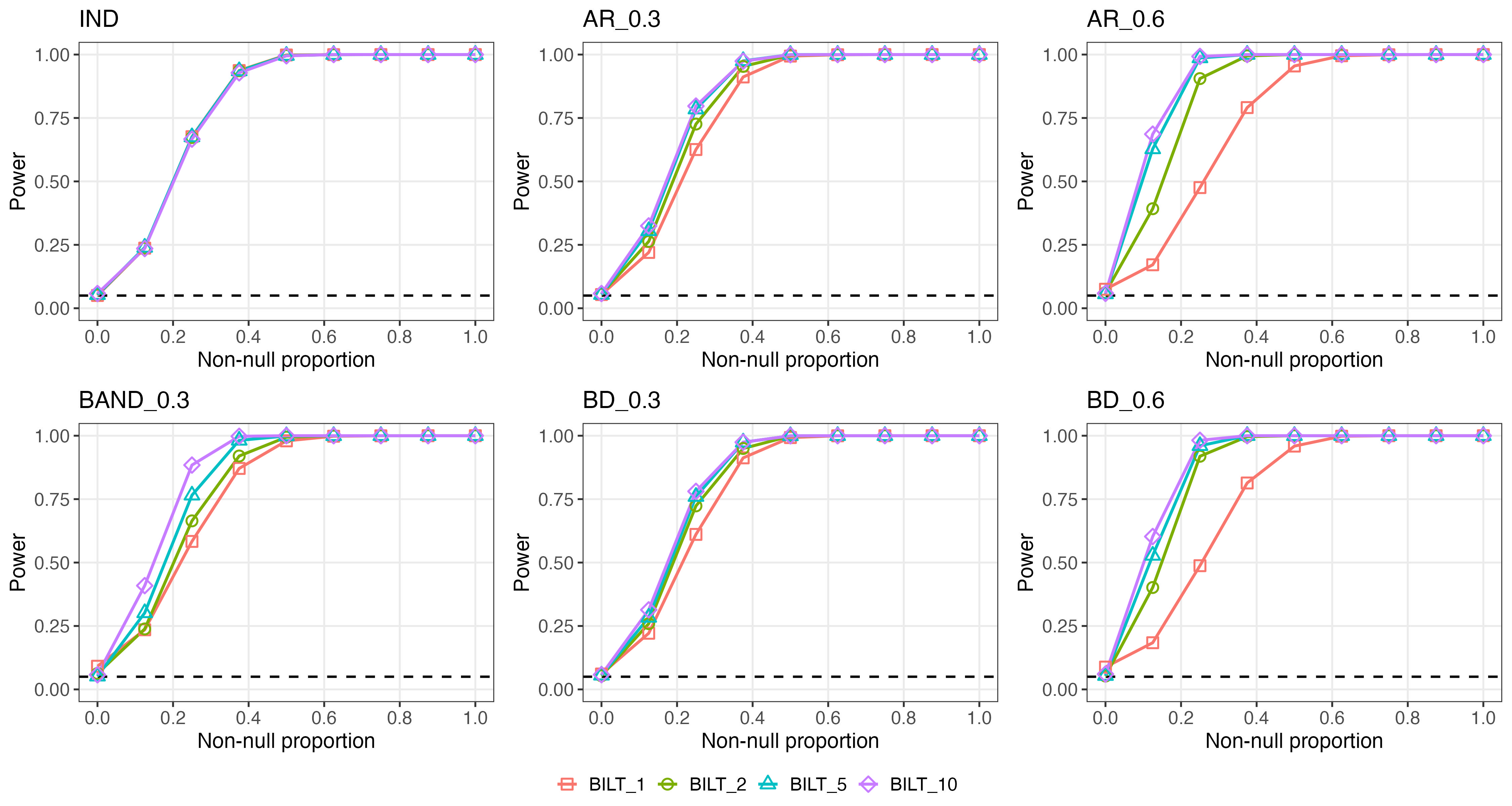}
    \caption{Power of BILT with $n_1 = n_2 = 100$ and $p = 1{,}000$. Each panel corresponds to a common covariance structure. Within each panel, results for block sizes $b \in \{1, 2, 5, 10\}$ are displayed. The x-axis represents the non-null proportion, and the y-axis represents the power obtained from 3{,}000 replications. The horizontal dashed line indicates the nominal Type~I error level of $0.05$.}
    \label{fig:D4}
\end{figure}

\end{document}